\newtheorem{theorem}{Theorem}[section]
\newtheorem{remark}[theorem]{Remark}
\newtheorem{lemma}{Lemma}[theorem]
\newtheorem{definition}{Definition}[section]
\journal{Journal of Process Control}
\begin{document}
	
	%% TODO: UNCOMMENT THIS SECTION TO COMPILE ALSO INITIAL PART OF THE PAPER
	%
	\begin{frontmatter}
		
		%% Title, authors and addresses
		
		%% use the tnoteref command within \title for footnotes;
		%% use the tnotetext command for theassociated footnote;
		%% use the fnref command within \author or \address for footnotes;
		%% use the fntext command for theassociated footnote;
		%% use the corref command within \author for corresponding author footnotes;
		%% use the cortext command for theassociated footnote;
		%% use the ead command for the email address,
		%% and the form \ead[url] for the home page:
		%% \title{Title\tnoteref{label1}}
		%% \tnotetext[label1]{}
		%% \author{Name\corref{cor1}\fnref{label2}}
		%% \ead{email address}
		%% \ead[url]{home page}
		%% \fntext[label2]{}
		%% \cortext[cor1]{}
		%% \affiliation{organization={},
			%%             addressline={},
			%%             city={},
			%%             postcode={},
			%%             state={},
			%%             country={}}
		%% \fntext[label3]{}
		
		\title{Self-tunable approximated explicit MPC: Heat exchanger implementation and analysis}
		%\title{Self-tunable approximated explicit MPC: Implementation on a heat exchanger with an asymmetric behavior}
		
		%% use optional labels to link authors explicitly to addresses:
		%% \author[label1,label2]{}
		%% \affiliation[label1]{organization={},
			%%             addressline={},
			%%             city={},
			%%             postcode={},
			%%             state={},
			%%             country={}}
		%%
		%% \affiliation[label2]{organization={},
			%%             addressline={},
			%%             city={},
			%%             postcode={},
			%%             state={},
			%%             country={}}
		
		\author{Lenka Galčíková}
		\author{Juraj Oravec}
		
		\affiliation{organization={Slovak University of Technology in Bratislava, Faculty of Chemical and Food Technology, Institute of Information Engineering, Automation, and Mathematics,},%Department and Organization
			            addressline={Radlinského 9}, 
			            city={Bratislava},
			            postcode={81237}, 
			%            state={Slovakia},
			            country={Slovakia}
			}
		
		\begin{abstract}						
			The tunable approximated explicit model predictive control (MPC) comes with the benefits of real-time tunability without the necessity of solving the optimization problem online. This paper provides a novel self-tunable control policy that does not require any interventions of the control engineer during operation in order to retune the controller subject to the changed working conditions. Based on the current operating conditions, the autonomous tuning parameter scales the control input using linear interpolation between the boundary optimal control actions. The adjustment of the tuning parameter depends on the current reference value, which makes this strategy suitable for reference tracking problems. Furthermore, a novel technique for scaling the tuning parameter is proposed. This extension provides to exploit different ranges of the tuning parameter assigned to specified operating conditions. The self-tunable explicit MPC was implemented on a laboratory heat exchanger with nonlinear and asymmetric behavior. The asymmetric behavior of the plant was compensated by tuning the controller's aggressiveness, as the negative or positive sign of reference change was considered in the tuning procedure. The designed self-tunable controller improved control performance by decreasing sum-of-squared control error, maximal overshoots/undershoots, and settling time compared to the conventional control strategy based on a single (non-tunable) controller. 
		\end{abstract}
		%
		%%%Graphical abstract
		%\begin{graphicalabstract}
			%%\includegraphics{grabs}
			%\end{graphicalabstract}
		%
		%%Research highlights
%		\begin{highlights}
%			 %3-5 highlights, every one should not exceed 85 characters including spaces
%			\item Self-tunable control technique does not need any intervention during control
%			\item Tuning based on operating conditions compensates for the nonlinear plant behavior
%			\item Implementation of the proposed tuning method on a laboratory heat exchanger 
%			\item Control performance improved compared to utilizing explicit MPC with constant setup
%		\end{highlights}
		
		\begin{keyword}
			tunable explicit MPC \sep self-tunable technique \sep tuning parameter \sep heat exchanger \sep control performance
			%% keywords here, in the form: keyword \sep keyword
			%% PACS codes here, in the form: \PACS code \sep code
			%% MSC codes here, in the form: \MSC code \sep code
			%% or \MSC[2008] code \sep code (2000 is the default)
			
			\end{keyword}
		
		\end{frontmatter}
	
	%% \linenumbers
	
	%% main text
	\section{Introduction}
	\label{sec:introduction}

	The current crisis of energy resources emphasizes the long-term goal of achieving sustainable industrial production and optimal energy utilization. Moreover, minimizing the energy utilization directly reduces the corresponding CO$_{2}$ emissions. Therefore, sustainable industrial production is focused on the wide implementation of advanced control methods~\cite{MN20}. A recent survey on applied thermal engineering focused on energy saving and pollution reduction from the industrial perspective is provided in~\cite{YV16}, and references therein.
	
	The heat exchangers in their numerous variants are integrated into many industrial plants as the heat transfer represents the crucial phenomena for all thermal energy applications~\cite{KV18}. Simultaneously, the utility generation for heating or cooling is energy-demanding. From the control viewpoint, the controller design for the heat exchangers is a challenging task due to the necessity to take into account the nonlinear and asymmetric behavior of the device, i.e., different plant behavior when the temperature is increasing, in contrast to the behavior when the temperature is decreasing, see~\cite{RL20}. 
	
	%What should be also considered, is a choice of the controlled variable, as the outlet stream temperature does not have to be always the efficient one. In~\cite{JIN_CV}, the idea is to optimize the controlled variable of counter flow heat exchanger. A new controlled variable named as heat exchanger comprehensive effectiveness is introduced, based on maximizing the amount of heat exchange and minimizing the irreversible loss. 
	
	A very common challenge in terms of the time-varying behavior of heat exchangers is fouling. The authors in~\cite{AGUEL_fouling} focus on modeling the thermal efficiency in a cross-flow heat exchanger using an artificial neural network, which leads to a highly accurate model. In~\cite{TRAFCZYNSKI_fouling}, the authors address the effect of fouling by adjusting the parameters of the proportional–integral–derivative (PID) controller. 
	
	Although the conventional and widely-used PID controllers are robust and easy to tune, their control performance may not be sufficient. Various extensions built above the well-tuned PID controller were developed to compensate for the nonlinear and asymmetric behavior, often affected by the additional impact of the uncertain parameters. Such widely-used control strategies include, e.g., the robust control~\cite{WY18}, the gain-scheduling, and adaptive control. In a recent study~\cite{VANNIEKERK_tuning}, the authors suggest to adjust the controller online, based on a minimization of an objective function designed to achieve the desired control performance. %In~\cite{VASICKANINOVA_HE}, a neural network predictive controller together with the fuzzy controller in the complex control structure was investigated. 
	For the rigorous mathematical modeling and controller design methods in general, see~\cite{MF08}, and for the controller design tailored for the process control engineers see~\cite{Liptak}.
	
	One of the promising control strategies addressing all these issues in an optimal way came with the formation of the model predictive control (MPC), e.g., see~\cite{Morari_MPC}. MPC provides optimal control input based on the minimization of a specified cost function while considering a model of the system. Compared to linear quadratic controllers (LQR)~\cite{LQR}, model predictive control also includes constraints on the control input or process variables~\cite{Maciejowski_MPC}, and additional saturation is not necessary. Moreover, as the optimization problem is solved in every control step, MPC represents a receding horizon control policy~\cite{receding_horizon}, having a significant benefit mainly in the terms of disturbance rejection. The model predictive control was intensively investigated in connection with heat exchangers. In~\cite{Vinaya_HE_MPC}, the authors developed a model predictive control for a shell and tube heat exchanger. %In~\cite{Oravec_energy}, the control performance of the robust MPC implemented on a plate heat exchanger with soft constraints was analyzed. 
	Four robust control strategies were presented and compared in~\cite{Oravec_HE_ATE}. A two-level control structure was applied on a heat exchanger network in~\cite{Gonzales_HE_MPC}, where the low level of control was ensured by MPC and the high level by a supervisory online optimizer. The fast nonlinear MPC was designed to optimize the waste heat recovery~\cite{WC19}. The multi-layer control designed in~\cite{DZ18} designed the MPC in the leader loop to optimize the thermal response to improved control performance. %In~\cite{DYRSKA_HE}, the authors focus on reduction of computational complexity associated with the evaluation of control action of MPC. By implementing the constraint removal technique, the energy consumption of the microcontroller is reduced.      
	
	The applicability of model predictive control expanded with the parametric solution of the MPC optimization problem, known as explicit MPC~\cite{Bemporad_automatica}. As the MPC optimization problem is pre-solved offline, it does not need to be solved in the online phase, i.e., in real-time control. Instead, a piece-wise affine (PWA) control law is evaluated to apply the optimal control action in each control step.
	The complexity of construction of the explicit MPC controller grows exponentially with the number of considered constraints. If the MPC design problem can be pre-solved explicitly offline, the consequent reduced online computational complexity makes the explicit MPC more suitable for practical industrial implementation. Nevertheless, the explicit MPC is not tunable in default as the conventional approach in~\cite{Bemporad_automatica} considers the penalty matrices with fixed structure and values. The inability to tune the explicit controller online can be a disadvantage due to varying operating conditions when the different setups of the controllers are beneficial.
	
	The possibility to tune the explicit MPC online came with the publishing of~\cite{Baric_tunable}. The tuning parameter penalizing the control inputs became a parameter, for which the optimal controller was precomputed. Nevertheless, the application was limited only to linear cost functions of the optimization problem. To satisfy the demands for often-used quadratic cost functions, the approximated tunable explicit MPC was presented in~\cite{Klauco_tunable}. The technique is based on two explicit model predictive controllers which differ in the setup of one penalty matrix. The two explicit MPCs provide upper and lower boundary optimal controllers. Based on the evaluation of the two boundary control inputs, the tuned control input is calculated by linear interpolation. The follow-up work~\cite{Oravec_tunable} provided stability and recursive feasibility guarantees by proper choice of the terminal penalty matrix and terminal set constraint~\cite{Mayne_stability}. Moreover, the strategy in~\cite{Oravec_tunable} extends the tuning ability based on \textit{any} penalty matrix and not just the input penalty.
	
	The idea of approximated tunable MPC with neural networks is presented in~\cite{Kis_NN_MPC}. To ensure the tuning property, the penalty matrices were included in the training process. As a result, it was possible to tune the neural network-based controller online, while mimicking the nearly optimal MPC. In~\cite{Kis_NN_MPC_corrector}, the neural network-based tunable controller MPC was extended with a corrector which steered the controller such that the constraints on the manipulated and process variables were satisfied. 
	
	The paper~\cite{self_tunable} pushes the idea of tunable explicit MPC further and deals with the issues of practical industrial-oriented implementation. In numerous practical applications, the reference value of the controlled variable is changed and acquires values from a wide range of operating conditions. The use of different controller setups can help handle the plant's nonlinear behavior. The paper~\cite{self_tunable} presents a procedure of the self-tunable controller technique. The controller's aggressivity is tuned based on the difference between the reference value and the steady state corresponding to the model linearization point. In the context of MPC, the aggressiveness is associated with the setup of the penalty matrices, as it determines the aggressiveness of the final control input. In general, higher penalization of the controlled states or control error in the cost function leads to more aggressive control actions. This process is analogous to increasing the proportional gain in the PID controller. On the contrary, higher penalization of the input variable leads to more sluggish control, e.g., see~\cite{Maciejowski_MPC}. In~\cite{self_tunable}, the MPC tuning based on the distance from the steady-state operating point represented a way how to compensate for the system's nonlinear behavior.  
	
	This work directly extends our results presented in~\cite{self_tunable}, where the basic principles of the self-tunable approximated explicit MPC were introduced. In this paper, a novel method of self-tuning parameter setup is introduced. Compared to~\cite{self_tunable}, the self-tuning method is based on the size of the reference step change. Moreover, the idea of further scaling of the tuning parameter is elaborated. The interval of the values of the self-tuning parameter is split at some certain value and each part of the interval corresponds to the specific operating conditions defined by the control engineer. In such a way, e.g., the system's asymmetric behavior is compensated. Finally, to investigate the benefits of the proposed approach, the proposed self-tuning control policy was implemented to control a laboratory-scaled counter-current plate heat exchanger. This work provides the control performance evaluation and analysis using the self-tunable controller compared to the boundary explicit MPCs.
	
	The paper is organized as follows. First, the theoretical backgrounds are presented in Section~\ref{sec:preliminaries}, where the explicit MPC, the approximated tunable explicit MPC, and existing self-tunable methods are briefly elaborated. Then, the novel proposed method of self-tunable procedure is explained in detail in Section~\ref{sec:methodology}. Finally, the experimental results of the self-tuning controller implementation on a heat exchanger are discussed in Section~\ref{sec:results}, followed by the main conclusions in Section~\ref{sec:conclusion}.
	
	\section{Theoretical backgrounds}
	\label{sec:preliminaries}
	
	In this section, the theoretical backgrounds necessary for the proposed method are summarized. First, the explicit model predictive control is briefly recalled. Next, the tunable technique of the approximated explicit model predictive control is introduced. Finally, the ideas of a self-tunable technique of the approximated explicit MPC are presented.
	
	\subsection{Explicit model predictive control}
	\label{sec:eMPC}
	
	Explicit model predictive control~\cite{Bemporad_automatica} utilizes a parametric solution of the model predictive control introducing its application range towards the systems with fast dynamics. Moreover, the explicit solution enables providing rigorous analysis and certification of the closed-loop system stability, constraints satisfaction, etc. As the explicit solution is available, real-time solving of the optimization problem in every control step is omitted. As this work deals with industrial-oriented implementation, let us consider the optimization problem in the following form:
	\begin{subequations}
		\label{eq:mpc_problem}
		\begin{eqnarray}
			\label{eq:mpc_problem_cost}
			\min_{u_0,u_{1},\ldots,u_{N-1}} &~& \! \sum_{k=0}^{N-1} \! \left( (y_k-y_\mathrm{ref})^{\intercal} Q_\mathrm{y} (y_k-y_\mathrm{ref}) + u_{k}^{\intercal} R u_{k} + x_{\mathrm{I},k}^{\intercal} Q_\mathrm{I} x_{\mathrm{I},k} \right)  \\
			\label{eq:mpc_problem_prediction_model_x}
			\mathrm{s.t.\!:} &~& \widetilde{x}_{k+1} = \widetilde{A}\,\widetilde{x}_{k} + \widetilde{B}\,u_{k}, \\
			\label{eq:mpc_problem_prediction_model_y}
			&~& y_{k} = \widetilde{C}\,\widetilde{x}_{k}, \\
			\label{eq:mpc_problem_input_constraints}
			&~& u_{k} \in \mathcal{U}, \\
			\label{eq:mpc_problem_state_constraints}
			&~& y_{k} \in \mathcal{Y}, \\
			\label{eq:mpc_problem_initial_coindition}
			&~& \widetilde{x}_{0} = \theta, \\
			\label{eq:mpc_problem_k_range}
			&~& k = 0,1,\ldots, N-1,
		\end{eqnarray}
	\end{subequations}
	where $k$ denotes the step of the prediction horizon $N$. 
	To obtain the offset-free control results, the built-in integrator was included in the state-space model, e.g., see~\cite{Ruscio_MPC_integral}. 
	The prediction model in Eq.~\eqref{eq:mpc_problem_prediction_model_x}--\eqref{eq:mpc_problem_prediction_model_y} has the form of augmented linear time-invariant (LTI) system for a given augmented state matrix $\widetilde{A} \in \mathbb{R}^{n_{\widetilde{\mathrm{x}}} \times n_{\widetilde{\mathrm{x}}}}$, augmented input matrix $\widetilde{B} \in \mathbb{R}^{n_{\widetilde{\mathrm{x}}} \times n_{\mathrm{u}}}$ and augmented output matrix $\widetilde{C} \in \mathbb{R}^{n_{\mathrm{y}} \times n_{\widetilde{\mathrm{x}}}}$. 
	Variables $\widetilde{x} \in \mathbb{R}^{n_{\widetilde{\mathrm{x}}}}$, $u \in \mathbb{R}^{n_{\mathrm{u}}}$, $y \in \mathbb{R}^{n_{\mathrm{y}}}$ are vectors of corresponding augmented system states, control inputs, and system outputs, respectively. 
	The sets $\mathcal{U} \subseteq \mathbb{R}^{n_{\mathrm{u}}}$, $\mathcal{Y} \subseteq \mathbb{R}^{n_{\mathrm{y}}}$ are convex polytopic sets of physical constraints on inputs and outputs, respectively. These sets include the origin in their strict interiors. The penalty matrix $Q_\mathrm{y} \in \mathbb{R}^{n_{\mathrm{y}} \times n_{\mathrm{y}}}, Q_\mathrm{y} \succeq 0$ penalizes the squared control error, i.e., the deviation between the controlled output and output reference value $y_\mathrm{ref}$. The matrix $R \in \mathbb{R}^{n_{\mathrm{u}} \times n_{\mathrm{u}}}, R \succ 0$ penalizes the squared value of control inputs. 
	The value of integrator is also penalized in the cost function with the penalty matrix $Q_\mathrm{I} \in \mathbb{R}^{n_{\mathrm{y}} \times n_{\mathrm{y}}}, Q_\mathrm{I} \succeq 0$. All the penalty matrices are considered to be diagonal due to the applicability of the self-tunable explicit MPC approach.
	The parameter $\theta \in \Theta$ in Eq.~\eqref{eq:mpc_problem_initial_coindition} represents the initial condition of the optimization problem for which it is parametrically pre-computed. 
	
	The augmented model of the controlled system with the built-in integrator in Eq.~\eqref{eq:mpc_problem_prediction_model_x}--\eqref{eq:mpc_problem_prediction_model_y} is rewritten as follows:
	\begin{subequations}
		\begin{eqnarray} 
			\label{eq:mpc_augmented_model_x} 
			\widetilde{x}_{k+1} &=& \begin{bmatrix} x_{k+1} \\ x_{\mathrm{I},k+1}\end{bmatrix} = \begin{bmatrix} A & \textit{0} \\ -T_\mathrm{s} C & I \end{bmatrix} \begin{bmatrix} x_{k} \\ x_{\mathrm{I},k} \end{bmatrix} + \begin{bmatrix} B \\ I \end{bmatrix} u_{k}, \\
			\label{eq:mpc_augmented_model_y}
			y_k &=& \begin{bmatrix} C & \textit{0} \end{bmatrix} \begin{bmatrix} x_{k} \\ x_{\mathrm{I},k} \end{bmatrix},
		\end{eqnarray}
	\end{subequations}
	where $x_{\mathrm{I}} \in \mathbb{R}^{n_{\mathrm{y}}}$ is the integral of the control error, $T_\mathrm{s}$ denotes the sampling time, and matrices $A$, $B$, $C$ are the well-known state-space matrices that form the augmented LTI model. As a consequence of this extension and penalization in the cost function in Eq.~\eqref{eq:mpc_problem_cost}, not only the control error is penalized, but also the integrated value, which leads to analogous offset-free reference tracking results as incorporating an integral part in the PID controller.
	
	The parametric solution of the optimization problem of the quadratic programming (QP) in Eq.~\eqref{eq:mpc_problem} leads to the explicit solution in the form of piecewise affine PWA control law defined above the domain consisting of $r$ critical regions:
	\begin{eqnarray}
		\label{eq:PWA_control_law}
		u(\theta) = \left\{ 
		\begin{matrix}
			F_{1} \, \theta + g_{1} & \mathrm{if} & \quad \theta \in \mathcal{R}_1, \\
			F_{2} \, \theta + g_{2} & \mathrm{else}\,\,\mathrm{if} &\quad \theta \in \mathcal{R}_2, \\
			& \vdots & \\
			F_{r} \, \theta + g_{r} & \mathrm{else}\,\,\mathrm{if} & \quad \theta \in \mathcal{R}_{r}, \\
		\end{matrix}
		\right.
	\end{eqnarray}
	where $F_{i} \in \mathbb{R}^{n_{\mathrm{u}} \times n_{\mathrm{x}}}$ and $g_{i}  \in \mathbb{R}^{n_{\mathrm{u}}}$ respectively are the slope and affine section of the corresponding control law. The PWA function defined in Eq.~\eqref{eq:PWA_control_law} is stored and recalled in the online phase, i.e., during the real-time control. Based on identifying the specific polytopic critical region $\mathcal{R}_{i}$, where the parameter $\theta$ belongs, the optimal control input is calculated based on the associated control law in Eq.~\eqref{eq:PWA_control_law}.
	
	Note, many other formulations of the optimization problems for the explicit MPC design were formulated mainly in terms of the definition of the cost functions in Eq.~\eqref{eq:mpc_problem_cost}. Also, the incremental (velocity) formulation of the state-space model is common, but leads to further extension of the vector of parameters $\theta$, and therefore also the complexity of the explicit MPC controller increases. Another option for offset-free tracking is introducing disturbance modeling and estimation. For such an overview see, e.g,.~\cite{Klauco_mpc}

	\subsection{Tunable explicit model predictive control}
	\label{sec:tunable}
	
	The aggressivity of the controller and the whole nature of the control is influenced by appropriate fine-tuning of the penalty matrices in the optimization problem in Eq.~\eqref{eq:mpc_problem}. When the multi-parametric QP (mp-QP) problem is precomputed offline to obtain the corresponding parametric solution, it is not possible to tune the controller afterward without trading off a significant increase in the controller complexity or the performance loss. As the operating conditions and requirements on controller setup may differ throughout the control, the ability to adjust the controller's aggressivity can be very beneficial.
	
	The idea of approximated tunable explicit MPC comes from the work ~\cite{Klauco_tunable}, where the control action is calculated based on linear interpolation between two boundary control actions. These control actions result from evaluating two boundary explicit MPCs. The boundary explicit controllers are constructed by solving the optimization problem having the same structure and setup, except for one of the penalty matrices -- the tuned one. Based on the specific control application, any penalty matrix can be chosen as the tuned parameter, i.e., this approach is applicable for any penalty matrix. The boundary penalty matrices follow the assumptions on the penalty matrices from Section 2.1 and are diagonal matrices such that $\lambda_{i,\mathrm{L}} \le \lambda_{i,\mathrm{U}}$, $\forall i = 1,\dots,s$, where $\lambda$ denotes the vector of eigenvalues of the penalty matrix, $s$ is the rank of the tuned penalty matrix, and $L$, $U$ denote the lower and upper boundary setup, respectively. 
	
	Let us consider the penalty matrices in the cost function in Eq.~\eqref{eq:mpc_problem_cost}. The penalty matrices are scaled in the following way:
	\begin{subequations}
		\label{eq:tunable_matrices}
		\begin{eqnarray}
			\label{eq:tunable_R}
			&~& R(k) = (1-\rho(k)) \, R_\mathrm{L} + \rho(k) \, R_\mathrm{U}, \\
			\label{eq:tunable_Qx}
			&~& Q_\mathrm{I}(k) = (1-\rho(k)) \, Q_\mathrm{I,L} + \rho(k) \, Q_\mathrm{I,U}, \\
			\label{eq:tunable_Qy}
			&~& Q_\mathrm{y}(k) = (1-\rho(k)) \, Q_\mathrm{y,L} + \rho(k) \, Q_\mathrm{y,U},
		\end{eqnarray}
	\end{subequations}
	where $\rho$ represents the tuning parameter such that $0 \le \rho \le 1$ holds. Based on the rules in Eq.~\eqref{eq:tunable_matrices}, it is possible to choose online any controller setup from the lower to the upper boundary of the tuned matrix. From the implementation point of view, it is preferred to tune just a single penalty matrix, i.e., to store only two controllers corresponding to the boundary values of the selected penalty matrix. To determine which penalty matrix in Eq.~\eqref{eq:tunable_matrices} should be tuned, it is suggested to judge the control performance by systematic tuning of all the penalty matrices. Systematic tuning involves selecting a specific penalty matrix and observing the control results by gradually increasing or decreasing the diagonal elements of the matrix. This process is then repeated for the remaining penalty matrices in a similar manner.
	
	When the tuning parameter $\rho$ is determined based on the current control conditions, the approximated optimal control action is evaluated using the two optimal controllers. Based on the boundary control actions, the interpolated, i.e., tuned control action is calculated using the convex combination:
	%\begin{subequations}
		\begin{eqnarray}
			\label{eq:tunable_u}
			u(k) = (1-\rho(k)) \, u_\mathrm{L}(k) + \rho(k) \, u_\mathrm{U}(k),
		\end{eqnarray}
		%\end{subequations}
	where $u_\mathrm{L}$ and $u_\mathrm{U}$ denote the optimal control actions from the lower and upper boundary controller, respectively. The online tuning of the controller comes with the cost of storing and evaluating two explicit controllers. Nevertheless, the ability to tune the controller may be more important in many practical applications.
	
	The concept of explicit MPC tuning is applicable to a wide class of MPC design formulations, based on the current specific needs. Without loss of generality, hereafter, let us consider the penalty matrices of the cost function in Eq.~\eqref{eq:mpc_problem_cost}, as it is necessary to satisfy offset-free reference tracking.
	
	\begin{remark}
		If the asymptotic stability and recursive feasibility guarantees are required, the reader is referred to follow the instructions from \cite{Oravec_tunable}. In order to satisfy these requirements, the study introduces a procedure for computing the common terminal penalty and terminal set for the two boundary controllers. 
	\end{remark}
	
	\begin{remark}
		Not only Eq.~\eqref{eq:tunable_u} needs to be chosen for interpolation of the control input. Another way of tuning of the control input can be using some nonlinear relation for the interpolation. 
	\end{remark}

	\subsection{Self-tunable explicit model predictive control}
	\label{sec:self_tunable}	
	The advantage of a tunable controller brings a question of how to design the logic of setting the tuning parameter $\rho$. In this section, the idea of online self-tuning is summarized~\cite{self_tunable}. The concept of self-tuning provides the possibility to adjust the aggressiveness of the controller without the necessity to intervene and tune the penalty matrices during control. 
	
	The need for real-time controller tuning often arises from tracking a time-varying piece-wise constant (PWC) reference. The work~\cite{self_tunable} focuses on adjusting the penalty matrix when the reference value is changed. The further the reference value is from the steady state, the more aggressively the controller is tuned. The idea behind the suggested scaling lies in compensation for the nonlinear behavior of the system.  
	
	Consider a single-input and single-output (SISO) system or a system with completely decoupled pairs of the control inputs and the system outputs. 
	Then, the procedure of tuning the controller is based on evaluating the different operating points between the current value of the reference and the system steady-state value. This deviation is considered to scale the value of control action. First, the maximal admissible absolute value of the reference is defined. Analogous to the reference trajectory preview concept of MPC design, this value can be determined based on the general knowledge of the expected future reference values. Another suggestion is to set the maximal deviation $d_{\max}$ based on the constraints on system outputs: 	
	\begin{eqnarray}
		\label{eq:d_max}
		d_{\max} = \max(\vert y_{\min} \vert, y_{\max}),
	\end{eqnarray}	
	where the symbol $\lvert.\rvert$, hereafter, denotes the element-wise absolute value, $y_{\min}$ and $y_{\max}$ are respectively lower and upper bound on the output variable in the deviation form, i.e., zero (origin) corresponds to the system steady-state value. Using the information about the maximal possible deviation $d_{\max}$, the tuning parameter $\rho$ can be calculated as the ratio between the current reference value and the maximal deviation: 
	\begin{eqnarray}
		\label{eq:rho_self_tunable}
		\rho (k) = \frac{\vert y_{\mathrm{ref}}(k) \vert}{d_{\max}}.
	\end{eqnarray}
	Based on Eq.~\eqref{eq:rho_self_tunable}, the property $0 \le \rho \le 1$ holds, as $\vert y_{\mathrm{ref}} \vert \le d_{\max}$. As a consequence, the parameter $\rho$ represents a way how to normalize the deviation from the steady-state value and is exploited to scale the control action or, implicitly, to tune the aggressiveness of the controller. 
	
	Note that the reference value must be reachable from the operating range to ensure that $0 \le \rho \le 1 $ holds. Otherwise, the interpolated control action would be the ``extrapolation'' leading to the loss of guarantees on the input or state constraints satisfaction, etc.
	
	When considering tuning the control action based on Eq.~\eqref{eq:tunable_u}, a higher value of tuning parameter $\rho$ leads to approaching the upper boundary controller and vice versa. When tuning, e.g., the matrix $Q_\mathrm{y}$ penalizing the control error, a higher ratio $\rho$ would lead to more aggressive control actions. When operating with the reference value close to the system steady-state value, the parameter $\rho$ decreases and the control profiles become sluggish.
	
	\begin{remark}
		In general, the parameter $d_{\max}$ is a vector, as it depends on the size of the system outputs. If $d_{\max}$ is scalar, the parameter $\rho$ is scalar as well and can be directly utilized to scale the control action. If multiple outputs are controlled, it is suggested to calculate the tuning parameter based on the maximal ratio as follows:
		\begin{eqnarray}
			\label{eq:rho_max}
			\rho(k) = \max \left( \frac{\vert y_{\mathrm{ref}}(k) \vert}{d_{\max}} \right).
		\end{eqnarray}
	\end{remark}
	
	Note that the relations in Eq.~\eqref{eq:rho_self_tunable} and Eq.~\eqref{eq:rho_max} operate with the absolute value of the reference. It is not taken into account whether the reference value changed upwards or downwards with respect to the system steady-state value placed in the origin, i.e., whether the inequality $\Delta_\mathrm{ref}(k) = y_\mathrm{ref}(k)-y_\mathrm{ref}(k-1) > 0$ holds or $\Delta_\mathrm{ref}(k) < 0$. As many plants have nonlinear behavior with an asymmetric nature (different behavior when the process variable is rising or decreasing), the positivity or negativity of the reference change could be considered in the controller self-tuning procedure to improve the control performance.

	\section{Methodology}
	\label{sec:methodology}
	This section extends the ideas of self-tunable explicit MPC in order to improve control performance. First, a different way of tuning parameter calculation is introduced. Furthermore, an extended self-tunable technique is presented to scale the tuning parameter for industrial-oriented applications, when it is beneficial to exploit a specific range of the tuning parameter in different operating conditions.

	\subsection{Tuning parameter based on the size of reference change}
	\label{sec:self_tunable_delta_ref}
	The approach of self-tunable explicit MPC in~\cite{self_tunable} suggested tuning based on the current reference value distance from the steady state. The aim is to compensate for the nonlinear behavior of a system when using a simple linear prediction model. This work provides also another useful way of the real-time evaluation of the tuning parameter $\rho$ based on the size of reference change. When different sizes of reference step changes are made and the behavior of the closed-loop system is varying, it can be beneficial to include the size of the reference step change in the tuning procedure.
	
	In this approach, the aggressivity is adjusted based on the ratio between the reference step change and the maximal reference step change that can be realized during the control operation:
	\begin{eqnarray}
		\label{eq:rho_delta_ref}
		\rho(k) = \frac{\vert \Delta_{\mathrm{ref}}(k) \vert}{\Delta_{\max}},
	\end{eqnarray} 
	where $\Delta_{\mathrm{ref}}(k) = y_{\mathrm{ref}}(k) - y_{\mathrm{ref}}(k-1)$ is the size of the reference step change. The denominator of Eq.~\eqref{eq:rho_delta_ref} is changed as well. In contrast to the maximal deviation from the steady state in Section~\ref{sec:self_tunable}, this approach introduces $\Delta_{\max}$ as the maximal possible reference step change. Analogously to the original approach, the maximal reference step can be set based on the general knowledge of the expected future reference values, i.e., $\Delta_{\max} = \Vert \Delta_{\mathrm{ref}}(k) \Vert_{\infty}, \forall k \ge 0 $. 
	Another option is to exploit the information about the system constraints and set the parameter $\Delta_{\max}$ according to Eq.~\eqref{eq:d_max}. 
	
	Note, only the absolute value of $\Delta_{\mathrm{ref}}$ and $\Delta_{\max}$ are considered in this procedure to ensure $\rho \ge 0$. 
	
	In Eq.~\eqref{eq:rho_delta_ref}, it is suggested to increase the value of tuning parameter $\rho$ with increasing value of reference step change. Note, in this work, the larger value of the tuning parameter leads to adding more weight on the penalty matrices associated with the upper boundary controller, see Eq.~\eqref{eq:tunable_matrices}. If the opposite logic of controller tuning is requested, it is possible to adapt the tuning such that 
	\begin{eqnarray}
		\label{eq:tunable_R_2}
		&~& R(k) = \rho(k) \, R_\mathrm{L} + (1-\rho(k)) \, R_\mathrm{U}, \\
		\label{eq:tunable_Qx_2}
		&~& Q_\mathrm{I}(k) = \rho(k) \, Q_\mathrm{I,L} + (1-\rho(k)) \, Q_\mathrm{I,U}, \\
		\label{eq:tunable_Qy_2}
		&~& Q_\mathrm{y}(k) = \rho(k) \, Q_\mathrm{y,L} + (1-\rho(k)) \, Q_\mathrm{y,U},
	\end{eqnarray}
	hold. This change leads to adding more weight to the lower boundary controller with the increasing value of the tuning parameter $\rho$.
	
	\begin{remark}
		The tuning parameter $\rho$ should be updated only when the reference changes. Updating the tuning parameter in the control steps when $\Delta_{\mathrm{ref}} = 0$ would lead to using tuning parameter $\rho$ with zero value, i.e., the control input would correspond to one boundary controller and would not be scaled.
	\end{remark}

	\subsection{Self-tunable technique for systems with asymmetric behavior}
	\label{sec:self_tunable_rho_scaling}
	
	This paper provides a further extension of the self-tuning method proposed in~\cite{self_tunable}. The suggested technique of tuning is suitable, e.g., for systems with asymmetric behavior, but can be used in any application, where ``simple'' tuning in the whole range of tuning parameter $\rho$ is not sufficient.
	
	The proposed self-tuning method is based on splitting the interval of the tuning parameter $\rho$ in order to utilize different parts of the interval in different operating conditions. Instead of the original value of tuning parameter $\rho$, the adjusted tuning parameter $\widetilde{\rho}$ is then utilized to scale the control input according to Eq.~\eqref{eq:tunable_u}.
	
	%% TODO: DEF RULE/CONDITION GAMMA
	\begin{definition}[Decision function]
		\label{def:gamma}
		For a given interval of tuning parameter $\rho$, $0 \leq \rho \leq 1$, let $\rho_{\mathrm{s}}$, $0 < \rho_{\mathrm{s}} < 1$ be a boundary value splitting the interval into two parts. Let $\gamma: \mathbb{R} \rightarrow \mathbb{R}$ be an arbitrary function such that $0 \leq \gamma \leq 1$ holds. Then the decision function $\gamma$ is constructed to assign its value either $\gamma \leq \rho_{\mathrm{s}}$ or $\gamma \ge \rho_{\mathrm{s}}$.
	\end{definition}
	Various decision functions $\gamma$ can be considered. In this work, the decision functions according to Eq.~\eqref{eq:rho_max} and Eq.~\eqref{eq:rho_delta_ref} are suggested, while Eq.~\eqref{eq:rho_delta_ref} was implemented in the experimental case study.
	
	\begin{definition}[Scaling of the tuning parameter]
		\label{def:rho_tilde}
		Given the value of tuning parameter $\rho$, $0 \leq \rho \leq 1$, the splitting value of the tuning parameter interval $\rho_{\mathrm{s}}$, $0 < \rho_{\mathrm{s}} < 1$, and the value of the decision function $\gamma$, $0 \leq \gamma \leq 1$. Then the scaling of the tuning parameter $\widetilde{\rho}$ is given by:		
		\begin{eqnarray}
			\label{eq:rho_tilde_def}
			\widetilde{\rho} = \left\{ 
			\begin{matrix}
				\rho \, \rho_{\mathrm{s}} & \mathrm{if} & \quad \gamma \in \langle 0, \rho_{\mathrm{s}} \rangle , \\
				\rho \, (1-\rho_{\mathrm{s}}) + \rho_{\mathrm{s}}, & \mathrm{else}\,\,\mathrm{if} &\quad \gamma \in \langle \rho_{\mathrm{s}}, 1 \rangle.
			\end{matrix}
			\right.
		\end{eqnarray}
	\end{definition}
	
	%\begin{definition}
		%		If the tuning parameter is scaled in the first part of the interval, i.e., $\langle 0, \rho_{\mathrm{s}} \rangle$, where $\rho_{\mathrm{s}}$ is the splitting value of the tuning parameter interval, the following relation is used to adjust the tuning parameter:
		%		\begin{eqnarray}
			%			\label{eq:rho_up}
			%			\widetilde{\rho} = \rho_{\mathrm{s}} \rho.
			%		\end{eqnarray}
		%		$\widetilde{\rho}$ such that $0 \leq \widetilde{\rho} \leq 1$ 
		%\end{definition}
	
	%	\begin{definition}
		%	Given the current parameter value $\rho$. If the tuning parameter is scaled in the second part of the interval, i.e., $\langle \rho_{\mathrm{s}}, 1 \rangle$, the following relation is used to adjust the tuning parameter:
		%	\begin{eqnarray}
			%		\label{eq:rho_down}
			%		\widetilde{\rho} = (1-\rho_{\mathrm{s}}) \rho + \rho_{\mathrm{s}}.
			%	\end{eqnarray}
		%	\end{definition}
	
	\begin{remark}
		\label{rem:rho_tilde}		
		The introduction of splitting the tuning parameter $\widetilde{\rho}$ into the tuning intervals in~\eqref{eq:rho_tilde_def} is not limited only to two intervals. If the nature of the controlled plant would benefit from splitting the operating range into more intervals, e.g., when the plant operates in the multiple steady-states values, then these intervals are simply determined by the corresponding values of $\rho_{\mathrm{s}, i}$ for each part of the interval. Next, the tuning rules in~\eqref{eq:rho_tilde_def} are adopted in an analogous way.
	\end{remark}
	
	The following outcomes result from Eq.~\eqref{eq:rho_tilde_def}. 
	%	\begin{corollary}
		% The following outcomes result from Eq.~\eqref{eq:rho_up}. If $\rho=0$, then $\widetilde{\rho} = 0$. When considering the other extreme, i.e., $\rho=1$, then $\widetilde{\rho} = \rho_{\mathrm{s}}$. In such a way, the parameter $\rho$ can be scaled in the first part of the interval up to the splitting value. Note, as the relation in Eq.~\eqref{eq:rho_up} is linear and $0 \le \rho \le 1$ holds, for $\widetilde{\rho}$ it is impossible to acquire other values than in the given interval of parameter $\rho$.		
		%	\end{corollary}
	%
	%	\begin{corollary}
		%		The following outcomes result from Eq.~\eqref{eq:rho_down}. When $\rho=0$, $\widetilde{\rho} = \rho_{\mathrm{s}}$. When considering the other extreme, i.e., $\rho=1$, then $\widetilde{\rho} = 1$. In such a way, the parameter $\rho$ can be scaled in the second part of the interval from the splitting value up to 1. Note, as the relation in Eq.~\eqref{eq:rho_down} is linear and $0 \le \rho \le 1$ holds, for $\widetilde{\rho}$ it is impossible to acquire other values than in the given interval of parameter $\rho$.
		%	\end{corollary}
	\begin{lemma}
		\label{lem:PWA_control_law_interval}
		Given control law in~\eqref{eq:PWA_control_law}, its approximation given by the convex combination in~\eqref{eq:tunable_u}, and given scaled tuning parameter $\widetilde{\rho}$ according to Definition~\ref{def:rho_tilde}. Then the control action approximated into the form:
		\begin{eqnarray}
			\label{eq:PWA_control_law_interval}
			u(k) = (1-\widetilde{\rho}(k)) \, u_\mathrm{L}(k) + \widetilde{\rho}(k) \, u_\mathrm{U}(k),
		\end{eqnarray}
		preserves the closed-loop system stability and recursive feasibility of the original control law in~\eqref{eq:PWA_control_law}.
	\end{lemma}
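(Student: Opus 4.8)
The plan is to reduce the assertion to the stability and recursive feasibility already established for the interpolated control law in Eq.~\eqref{eq:tunable_u}, by checking that the only thing Definition~\ref{def:rho_tilde} changes is the numerical value of the interpolation weight, which still lies in $[0,1]$.

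First I would verify that $\widetilde{\rho}(k)\in[0,1]$ for every admissible $\rho(k)$ and every value of the decision function $\gamma$. Arguing branch by branch in Eq.~\eqref{eq:rho_tilde_def}: in the first case $\widetilde{\rho}=\rho\,\rho_{\mathrm{s}}$, and since $0\le\rho\le 1$ and $0<\rho_{\mathrm{s}}<1$ one obtains $0\le\widetilde{\rho}\le\rho_{\mathrm{s}}<1$; in the second case $\widetilde{\rho}=\rho(1-\rho_{\mathrm{s}})+\rho_{\mathrm{s}}$ is an increasing affine function of $\rho$ that maps $[0,1]$ onto $[\rho_{\mathrm{s}},1]$, so again $\widetilde{\rho}\in[0,1]$. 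Hence both branches keep $\widetilde{\rho}$ inside $[0,1]$, the split point $\rho_{\mathrm{s}}$ being the shared endpoint of the two sub-intervals $\langle 0,\rho_{\mathrm{s}}\rangle$ and $\langle\rho_{\mathrm{s}},1\rangle$. I would also record that Eq.~\eqref{eq:PWA_control_law_interval} has exactly the structure of Eq.~\eqref{eq:tunable_u}: it is the convex combination of the boundary actions $u_\mathrm{L}(k)$ and $u_\mathrm{U}(k)$ with the scalar weight $\widetilde{\rho}(k)$ in place of $\rho(k)$, while the two boundary controllers themselves are left untouched.

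Next I would invoke the construction of~\cite{Oravec_tunable} (recalled in the first remark of Section~\ref{sec:tunable}): once the two boundary explicit MPCs are equipped with a common terminal penalty matrix and a common terminal set, every convex combination $(1-\alpha)\,u_\mathrm{L}(k)+\alpha\,u_\mathrm{U}(k)$ with $\alpha\in[0,1]$ yields an asymptotically stable and recursively feasible closed loop, inheriting these properties from the exact PWA law in Eq.~\eqref{eq:PWA_control_law}. Since the first step guarantees $\widetilde{\rho}(k)\in[0,1]$ for all $k$, the law in Eq.~\eqref{eq:PWA_control_law_interval} is a member of that family with $\alpha=\widetilde{\rho}(k)$, and the claimed guarantees follow immediately.

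I do not anticipate a genuine obstacle: the only point needing care is confirming the range of $\widetilde{\rho}$ across both branches of Eq.~\eqref{eq:rho_tilde_def}, and, should the multi-interval variant of Remark~\ref{rem:rho_tilde} be covered as well, checking that the analogous scaling built from several split points $\rho_{\mathrm{s},i}$ remains affine and increasing on each sub-interval and therefore still maps into $[0,1]$. Everything else is a direct appeal to the existing result, since the scaling alters neither the interpolation structure of Eq.~\eqref{eq:tunable_u} nor the underlying boundary controllers.
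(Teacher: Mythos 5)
Your proposal is correct and follows essentially the same route as the paper's own proof: verify branch by branch that Eq.~\eqref{eq:rho_tilde_def} keeps $\widetilde{\rho}$ within $[0,\rho_{\mathrm{s}}]$ or $[\rho_{\mathrm{s}},1]$, hence within $[0,1]$, and then invoke the stability and recursive-feasibility result of~\cite{Oravec_tunable} for convex combinations of the boundary control actions. No gaps; the remark on the multi-interval extension matches the paper's own observation.
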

	
	\begin{proof}
		\label{proof:rho_tilde}
		It has been proven~\cite{Oravec_tunable} that for the asymptotic stable and recursive feasible pair of control inputs $(u_{\mathrm{L}}, u_{\mathrm{U}})$, the approximated control law in~\eqref{eq:PWA_control_law} preserves these properties for any $\rho$ satisfying $0 \leq \rho \leq 1$, see Theorem~3.6 in~\cite{Oravec_tunable}. 
		It remains to prove that for any value of the scaled tuning parameter $\widetilde{\rho}$ according to the Definition~\ref{def:rho_tilde} the same results hold. 
		The rest of the proof of Lemma~\ref{lem:PWA_control_law_interval} consists of two parts corresponding to each particular rule in~\eqref{eq:rho_tilde_def}. 
		\\
		First, it is proved that the Lemma~\ref{lem:PWA_control_law_interval} holds for any $\gamma \leq \rho_{\mathrm{s}}$. Substituting a lower bound $\rho=0$ into~\eqref{eq:rho_tilde_def} leads to
		$\widetilde{\rho} = 0$. 
		For the upper bound value of $\rho=1$, from~\eqref{eq:rho_tilde_def} holds $\widetilde{\rho} = \rho_{\mathrm{s}} < 1$. Next, for any value $0 < \rho < 1$ evaluation of the linear rule in~\eqref{eq:rho_tilde_def} leads to the convex combination, i.e., $0 < \widetilde{\rho} < \rho_{\mathrm{s}} $ holds. 
		Therefore, any value of $\widetilde{\rho}$ satisfies $0 \leq \widetilde{\rho} \leq \rho_{\mathrm{s}} < 1$. As a consequence, according to the Theorem~3.6 in~\cite{Oravec_tunable}, the asymptotic stability and recursive feasibility of the control law in~\eqref{eq:PWA_control_law_interval} are preserved. 
		\\
		Secondly, it is proved that the Lemma~\ref{lem:PWA_control_law_interval} holds also for any $\gamma \geq \rho_{\mathrm{s}}$. Substituting a lower bound $\rho=0$ into~\eqref{eq:rho_tilde_def} leads to
		$\widetilde{\rho} = \rho_{\mathrm{s}}$. 
		For the upper bound value of $\rho=1$, from~\eqref{eq:rho_tilde_def} holds $\widetilde{\rho} = 1$. Next, for any value $0 < \rho < 1$ evaluation of the linear rule in~\eqref{eq:rho_tilde_def} leads to the convex combination, i.e., $\rho_{\mathrm{s}} < \widetilde{\rho} < 1 $ holds. 
		Therefore, any value of $\widetilde{\rho}$ satisfies $\rho_{\mathrm{s}} \leq \widetilde{\rho} \leq 1$. As a consequence, according to the Theorem~3.6 in~\cite{Oravec_tunable}, the asymptotic stability and recursive feasibility of the control law in~\eqref{eq:PWA_control_law_interval} are preserved. 
		% The rest of the proof follows the proof of the Theorem~3.6 in~\cite{Oravec_tunable}. 
	\end{proof}
	
	\begin{remark}
		The Lemma~\ref{lem:PWA_control_law_interval} can be extended subject to the multiple intervals in an analogous way following the Remark~\ref{rem:rho_tilde}.
	\end{remark}
	
	The advantage of the proposed method remains in the self-tuning of the controller as in the approach from Section~\ref{sec:self_tunable}. Nevertheless, it is required to appropriately determine the splitting value of the tuning parameter $\rho_{\mathrm{s}}$ and assign the parts of the interval to the associated operating conditions.
	
	\begin{remark}
		Note, the suggested scaling method is suitable also for online MPC, as the optimization problem is solved in every control step. Therefore, it is possible to include the controller tuning in the procedure of computing the optimal control input.   
	\end{remark}
	
	For a detailed insight into the proposed control technique, the procedure of self-tuning evaluation is depicted in Figure~\ref{fig:tuning}. 
	
	From the point of computational complexity, the proposed tuning procedure does not lead to any significantly demanding mathematical operations. Simple algebraic operations in Eq.~\eqref{eq:rho_delta_ref} and Eq.~\eqref{eq:rho_tilde_def} are evaluated. Note, the overall control strategy still comes with the cost of storing and evaluating two explicit controllers.
	
	\begin{figure}
		\begin{center}
			\includegraphics[width=\textwidth]{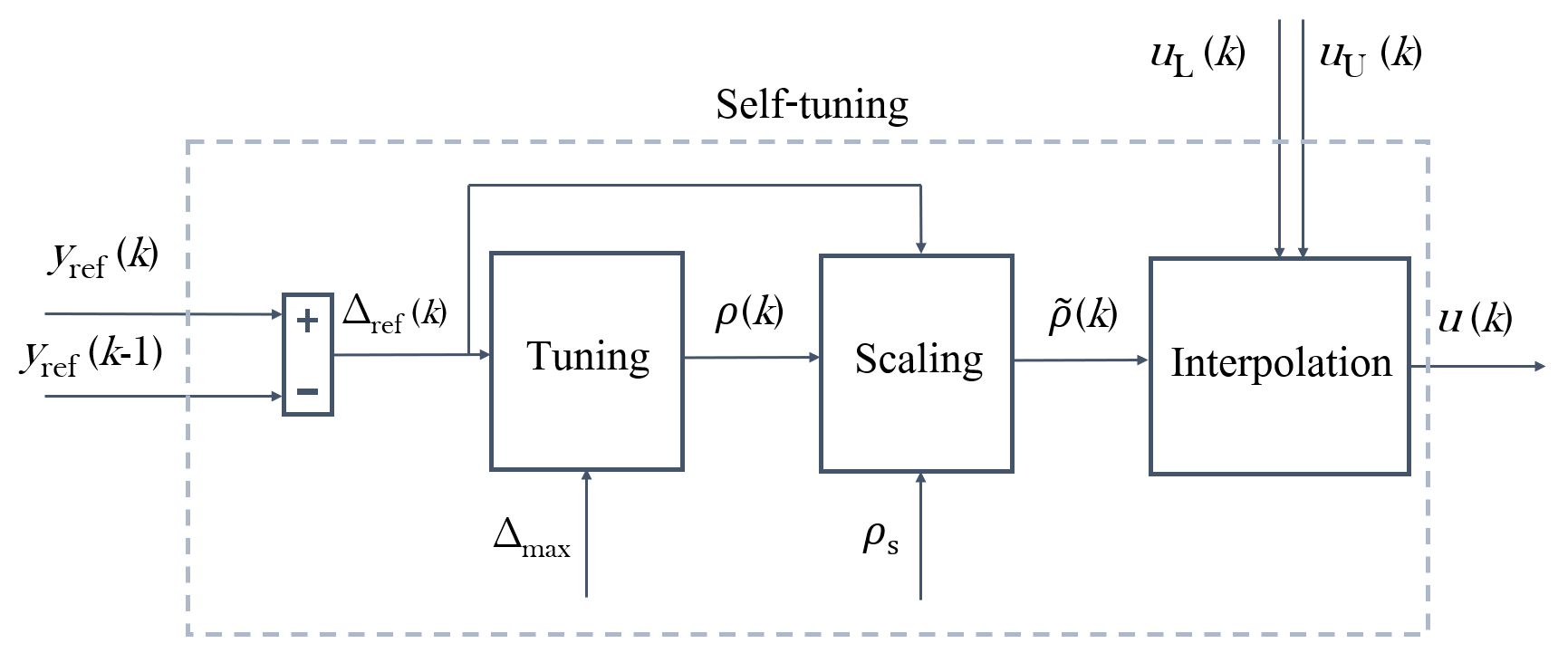}
			\caption[Self-tuning scheme]{Scheme of the self-tuning control evaluation.}
			\label{fig:tuning}
		\end{center}
	\end{figure}

	\section{Results and discussion}
	\label{sec:results}
	
	In this section, the results of the proposed self-tuning method are analyzed by an experimental implementation. The self-tuning strategy utilizes tuning parameter calculation based on the size of reference change (Section~\ref{sec:self_tunable_delta_ref}) and the scaling of tuning parameter based on splitting the interval of the parameter and assigning the interval parts to specific operating conditions (Section~\ref{sec:self_tunable_rho_scaling}).

	The plant on which the control was implemented and analyzed is a laboratory-scaled counter-current liquid-to-liquid plate heat exchanger Armfield Process Plant Trainer PCT23~\cite{pct23}, see Figure~\ref{fig:HE}. The schematic of the plant is depicted in Figure~\ref{fig:HE_scheme}. The heat exchanger is 90\,mm wide, 103\,mm long, and 160\,mm high. The heat exchange is performed between the cold medium (water) and the hot medium (water). The cold medium as well as the heating medium are transported to the heat exchanger by two peristaltic pumps with flexible tubing from silicon rubber. The flow rate of the cold medium is constant, while the aim of control is to track the reference value of the outlet cold medium temperature. Therefore, the controlled variable is the cold medium temperature $T$ at the outlet of the heat exchanger. The inlet cold medium temperature was constant during the whole control, i.e., $T_\mathrm{C} = 19^{\circ}$\,C. The temperature of the heated cold medium in the outlet stream was measured by the type K thermocouple. The associated manipulated variable is the voltage $U$ corresponding to the power of the pump feeding the heat exchanger by the hot medium. The voltage is within the range of [$0-5$]\,V normalized into the relative values in percentage. The maximal voltage 5\,V or 100\% corresponds to volumetric flow rate 11.5\,$\mathrm{ml}\,\mathrm{s}^{-1}$. For further technical specifications of the laboratory heat exchanger, the reader is referred to~\cite{pct23}. As heat exchange is a nonlinear and asymmetric process~\cite{Liptak}, this heat exchanger represents a suitable candidate for the presented controller tuning strategy. The corresponding illustrative scheme of the implemented closed-loop control setup is in Figure~\ref{fig:CL}, where the ``Self-tuning'' block substitutes the more detailed scheme of the tuning procedure in Figure~\ref{fig:tuning}.  

\begin{figure}
	\begin{center}
		\includegraphics[width=0.8\textwidth]{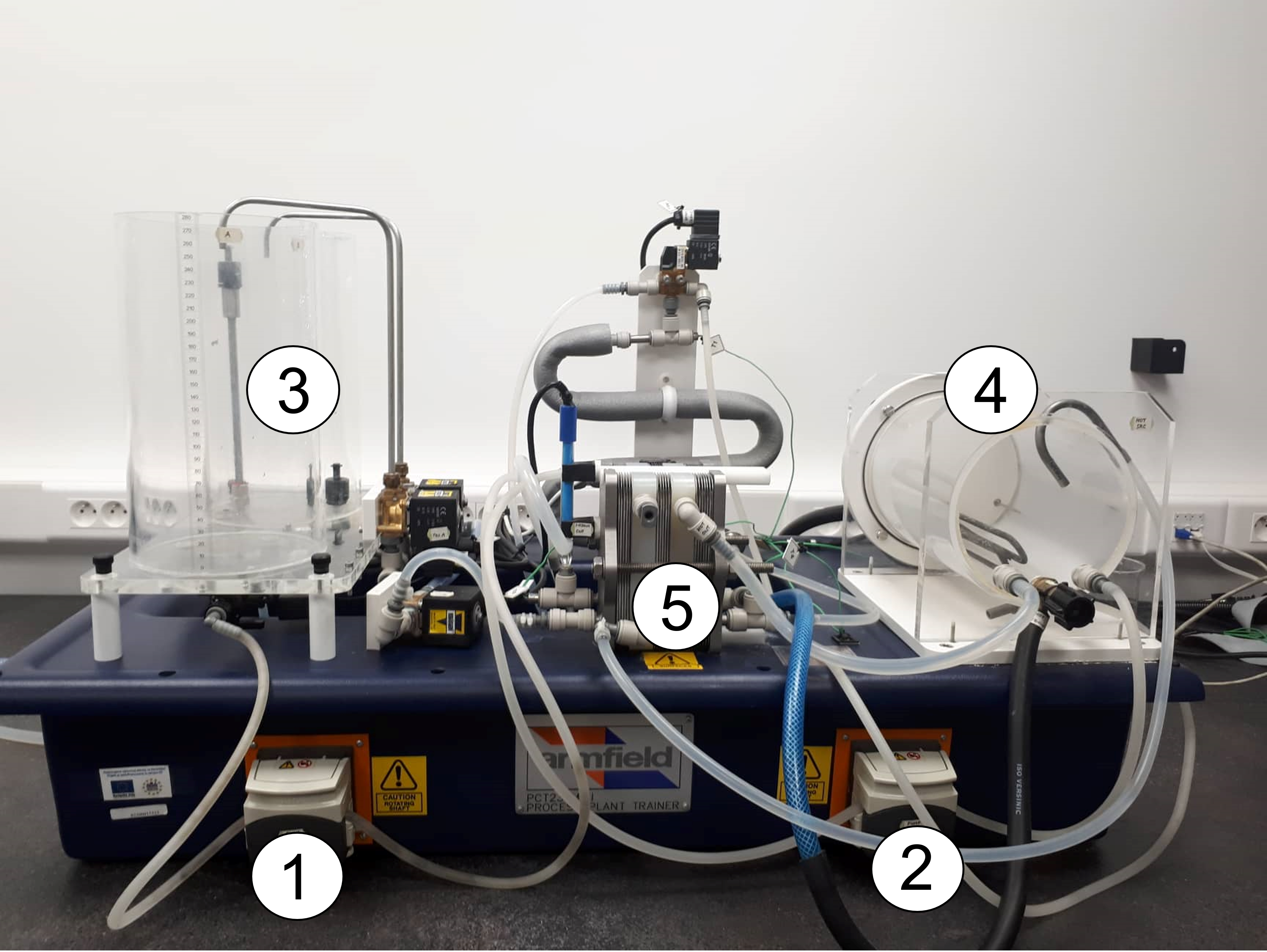}
		\caption[Heat exchanger Armfield Process Plant Trainer PCT23]{Laboratory heat exchanger Armfield Process Plant Trainer PCT23: cold medium pump\,(1), heating medium pump\,(2), cold medium tanks\,(3), heater for heating medium\,(4), heat exchanger\,(5).}
		\label{fig:HE}
	\end{center}
\end{figure}

\begin{figure}
	\begin{center}
		\includegraphics[width=0.8\textwidth]{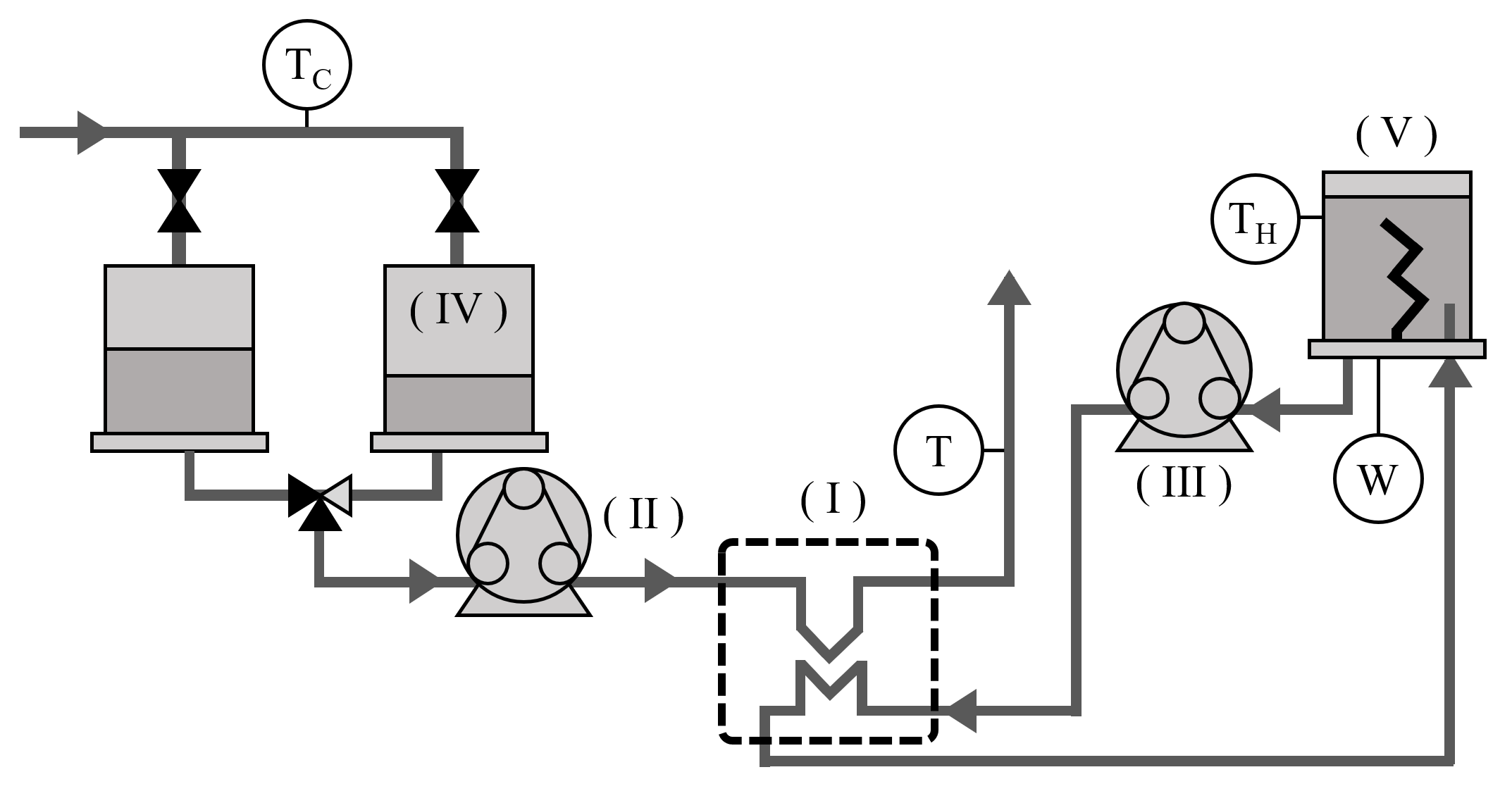}
		\caption{Scheme of Armfield PCT23. Heat exchanger (I), peristaltic pump for cold medium (II), peristaltic pump for heating medium (III), tank for cold medium (IV), heater for heating medium (V), temperature sensors ($\mathrm{T}$ -- controlled temperature, $\mathrm{T}_\mathrm{C}$ -- cold outlet cold medium temperature, $\mathrm{T}_\mathrm{H}$ -- heating medium temperature), and electric power for maintaining the temperature of the heating medium (W). }
		\label{fig:HE_scheme}
	\end{center}
\end{figure}

The system was identified by experimental identification. The aim was to work with linear nominal model in MPC optimization problem to decrease the numerical complexity. To avoid plant-model mismatch in order to ensure offset-free tracking, either disturbance observer or built-in integrator (Eq.~\eqref{eq:mpc_problem}) can be employed. Due to the ease of implementation, in this work, the built-in integrator was considered. The system was identified based on several measured step responses. The step changes were performed in the whole range of admissible values of manipulated variable and every step response was identified by transfer function. It was possible to identify every step response as a first-order system, while the nominal gain and time constant are respectively $K = 0.24\,^{\circ}\mathrm{C}$ and $\tau = 5.7$\,s. 
%More specifically, several step responses were obtained in the whole range of the input variable steps. Every step response was identified as a first-order system and the nominal transfer function was calculated.
Finally, the nominal transfer function was converted to the state-space model. The matrices of the discrete-time state-space model of the plant are
\begin{subequations}
	\label{eq:model_A_B} 
	\begin{eqnarray}
		A = \begin{bmatrix}
			0.839
		\end{bmatrix}, \quad
		B = \begin{bmatrix}
			0.039
		\end{bmatrix}, \quad
		C = \begin{bmatrix}
			1
		\end{bmatrix}, 
		%\quad
		%D = \begin{bmatrix}
		%0
		%\end{bmatrix}.
	\end{eqnarray}
\end{subequations}
considering the sampling time $T_\mathrm{s}$ = 1\,s. 
To respect the physical limitations of the operating conditions, the constraints are considered in the terms of control inputs
\begin{eqnarray}
	\label{eq:u_const}
	-15\,\% \le u \le 65\,\%,
\end{eqnarray}
where the variable $u$ represents the control inputs in the deviation form. The values of the heated cold medium temperature and voltage of the heating medium pump corresponding to zero steady states are respectively $T^\mathrm{s}$~=~35\,$^{\circ}\mathrm{C}$ and $U^\mathrm{s}$~=~35\,\%. Therefore, the physical constraints on the manipulated variable are actually
\begin{eqnarray}
	\label{eq:U_const}
	20\,\% \le U \le 100\,\%.
\end{eqnarray}	

As the controlled system is naturally stable even if the maximal or minimal value of the manipulated variable is constantly applied, the constraints on the controlled variable in Eq.~\eqref{eq:mpc_problem_state_constraints} could be omitted. On the other hand, unbounded states/outputs lead to higher memory consumption, because covering the whole possible range of parameters requires more critical regions. Therefore, the ``redundant'' constraints on the system outputs were included in order to reduce the number of critical regions and the overall memory footprint of the explicit controllers. The output constraints were set as: 
\begin{eqnarray}
	\label{eq:y_const}
	-15\,^{\circ}\mathrm{C} \le y \le 20\,^{\circ}\mathrm{C}.
\end{eqnarray}	

The constraints in Eq.~\eqref{eq:y_const} are equal to physical temperature as follows:
\begin{eqnarray}
	\label{eq:Y_const}
	20\,^{\circ}\mathrm{C} \le y \le 55\,^{\circ}\mathrm{C},
\end{eqnarray}
which corresponds to the range of temperature values which are achievable in the considered laboratory conditions and setup.		 

The penalty matrices of the problem in Eq.~\eqref{eq:mpc_problem} were systematically tuned, and the corresponding control setup was implemented on the laboratory heat exchanger for each setup of the considered explicit MPC controllers. 
First, the tuning procedure aimed to determine which penalty matrix is the most suitable for real-time tuning. The most relevant was the penalty matrix $Q_\mathrm{y}$ as the tuning is directly associated with a reference value, which takes place in the calculation of the tuning factor $\rho$. Moreover, the tuning of $Q_\mathrm{y}$ preserved a satisfactory control performance. Next, the boundary values of the tunable matrix $Q_\mathrm{y}$ were tuned until the following limit values were determined based on the measured closed-loop control data: $Q_\mathrm{y, L}$ = 100 and $Q_\mathrm{y, U}$ = 1\,000. The built-in integrator was penalized with the fixed penalty matrix $Q_\mathrm{I}$ = 1 and the control input with the fixed penalty matrix $R$ = 10. The prediction horizon $N$ was set to 20 control steps. The explicit model predictive controllers were constructed in MATLAB R2020b using the Multi-Parametric Toolbox 3~\cite{mpt_conf}. 

The explicit MPC corresponding to the penalty matrix $Q_\mathrm{y, U}$ contains 1\,680 critical regions, and the explicit MPC with the penalty matrix $Q_\mathrm{y, L}$ contains 409 critical regions. The corresponding polytopic partitions can be seen in Figure~\ref{fig:partition_U} for the upper boundary controller and Figure~\ref{fig:partition_L} for the lower boundary controller.  

\begin{figure}
	\begin{center}
		\includegraphics[width=0.8\textwidth]{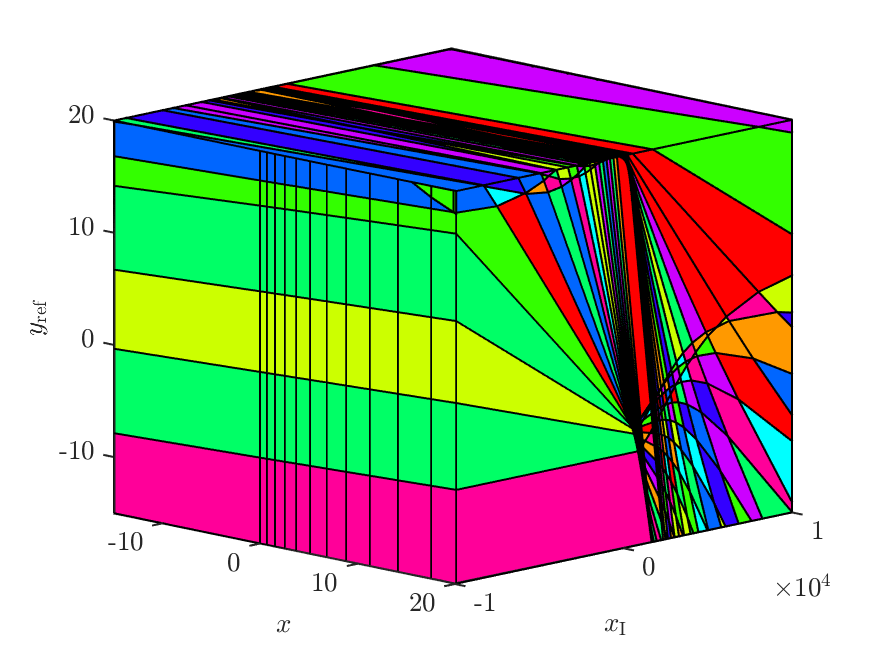}
		\caption{Polytopic partition of the upper boundary explicit MPC.}
		\label{fig:partition_U}
	\end{center}
\end{figure}

\begin{figure}
	\begin{center}
		\includegraphics[width=0.8\textwidth]{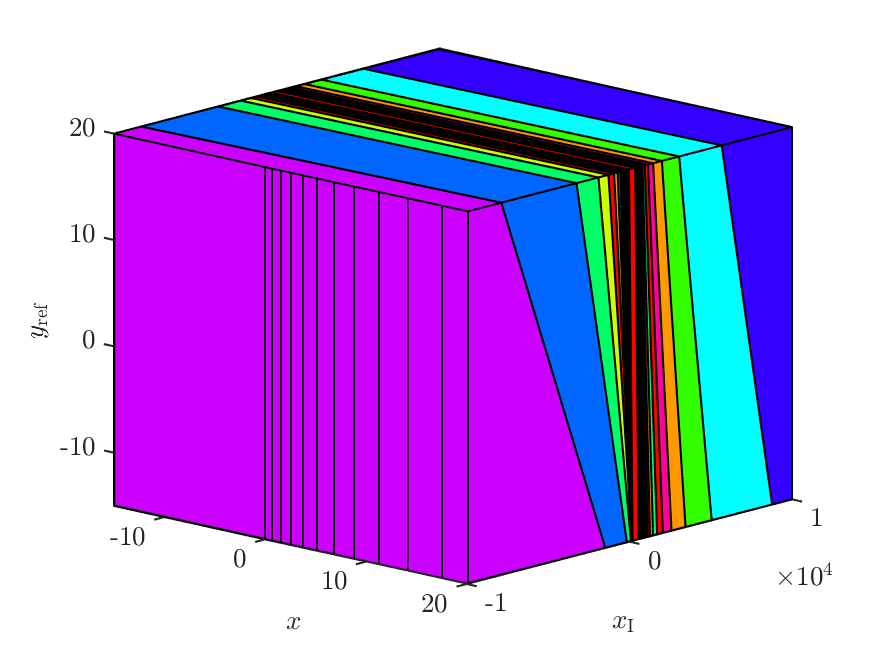}
		\caption{Polytopic partition of the lower boundary explicit MPC.}
	\label{fig:partition_L}
\end{center}
\end{figure}

%The initial temperature of the feed was laboratory temperature, i.e., 23$^{\circ}\mathrm{C}$. The temperature of the heating medium was 70$^{\circ}\mathrm{C}$.

The designed explicit model predictive controllers were implemented to track a time-varying PWC reference. 
For the initial 200 seconds, the reference temperature was the steady-state value. After that, the reference changed its value twice upwards and twice downwards. The reference changes also acquired different sizes in order to examine the proposed tuning method as it is dependent on the size of the reference step change. Specifically, the reference temperature values were $T_{\mathrm{ref}}$ = 35\,$^{\circ}\mathrm{C}$, 45\,$^{\circ}\mathrm{C}$, 50\,$^{\circ}\mathrm{C}$, 45\,$^{\circ}\mathrm{C}$, 35\,$^{\circ}\mathrm{C}$.

Besides the control design of two boundary explicit MPCs, it was necessary to keep the temperature of the heating medium constant. The heating medium was transported back to the heater after leaving the heat exchanger, i.e., the volume of the heating medium was recycled during the whole operation. The temperature of the heating medium was maintained on the value 70\,$^{\circ}\mathrm{C}$ with a simple proportional controller with proportional gain $P = 20$. The control input from the proportional controller was the electric power which could acquire the values in the range [$0-2$]\,kW and was also normalized to percentage.

The control profiles generated for both considered boundary control setups are compared in Figure~\ref{fig:CV} for the controlled variable, and in Figure~\ref{fig:MV} for the control inputs. 
Note, the constructed explicit MPC controller computed control inputs to respect the constraints on the control inputs and they need not be truncated afterward. 

An interesting phenomenon can be observed while tracking the third reference value, i.e., $T_{\mathrm{ref}}$ = 50\,$^{\circ}\mathrm{C}$. Although the steady-state values of temperature have the same value, the values of the manipulated variable are different. To check the correctness of the results, the measurements were performed multiple times and led to the same behavior. Also, the inlet temperatures of the cold and heating medium were checked to exclude the effect of a disturbance. Regarding the temperature of the cold medium, due to the limited hardware interface, it was not possible to measure the data continuously, store them, and plot the trajectory in a Figure. Nevertheless, the temperature of the cold medium was manually checked multiple times during the experiment and was constant.
	
Regarding the temperature of the heating medium, the corresponding trajectories of the temperature can be seen in Figure~\ref{fig:Th}, and the electric power, i.e., the corresponding manipulated variable, can be seen in Figure~\ref{fig:W}. Note that the legends correspond to the specific setup of MPC, but the temperature of the heating medium was controlled with a simple P controller with the same proportional gain in every control scenario. 

It can be seen that the temperature of the heating medium remains relatively constant during the whole control, except for the undershoots in the scenario with upper boundary MPC, i.e., blue trajectory. The undershoots can be easily associated with the trajectory of the voltage on the pump dosing the heating medium (and ultimately the heating medium flow rate). As the upper boundary MPC calculated ``aggressive'' control inputs, the increased flow rate of the heating medium led to a slight decline in the heating medium temperature. After approximately 100 seconds, the heating medium warmed up to the reference value, i.e., $T_{\mathrm{H, ref}} = 70^{\circ}$\,C and remained constant within the accuracy of the temperature sensor. It can be seen that although the temperature of the heating medium is constant and identical for all control scenarios (MPC setups), the value of the voltage on the pump dosing the heating medium is not the same when tracking the temperature $T_{\mathrm{ref}} = 50^{\circ}$\,C. Therefore, the same conditions were fulfilled for all control scenarios. 

The reason for this behavior could be explained by the peak of the manipulated variable associated with the upper boundary controller at time 800\,s, see Figure~\ref{fig:MV}, blue. After approximately 100\,s, the value of the manipulated variable dropped and settled at a value lower than the value associated with the lower boundary controller, see Figure~\ref{fig:MV}, red. This is a consequence of the heat accumulated inside the heat exchanger plates, and therefore, less heating medium was necessary to heat the cold medium. This phenomenon does not happen when tracking the reference value $T_{\mathrm{ref}}$ = 45\,$^{\circ}\mathrm{C}$, which originates in the nonlinear nature of the heat transfer process. When working in a higher temperature range, the gain of the heat transfer process decreases, and the sensitivity to changes in the heating medium flow is lower. Therefore, even different flow rates of the heating medium lead to the same temperature at the outlet.

\begin{figure}
	\begin{center}
		\includegraphics[width=\textwidth]{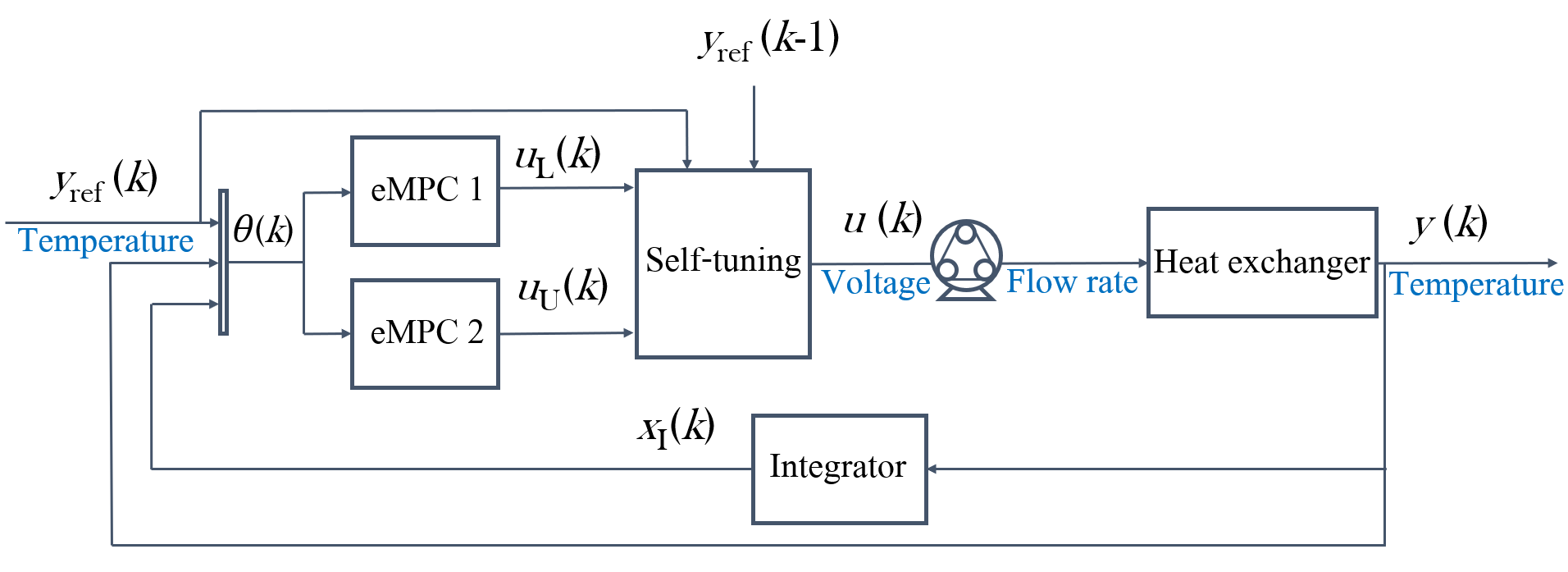}
		\caption[Scheme of the implemented closed-loop control setup.]{Scheme of the implemented closed-loop control setup, where ``eMPC'' denotes explicit MPC.}
		\label{fig:CL}
	\end{center}
\end{figure}

%\begin{figure}
%	\begin{center}
%		\includegraphics[width=0.8\textwidth]{}
%		\caption{Controlled variable trajectory for two boundary controllers. The solid lines represent the controlled temperature $T$ and the dashed line represents the reference value $T_{\mathrm{ref}}$.}
%		\label{fig:CV_boundaries}
%	\end{center}
%\end{figure}

%\begin{figure}
%	\begin{center}
%		\includegraphics[width=0.8\textwidth]{}
%		\caption{Manipulated variable trajectory for two boundary controllers. The solid lines represent the voltage $U$ and the dashed lines represent the constraints.}
%		\label{fig:MV_boundaries}
%	\end{center}
%\end{figure}

The trajectories in Figure~\ref{fig:CV} show the asymmetric nature of controlling the plant of plate heat exchanger mainly when observing the overshoots and undershoots. When applying the control inputs associated with the lower boundary penalty matrix $Q_\mathrm{y, L}$ in Eq.~\eqref{eq:mpc_problem_cost}, significant undershoots are present when tracking the reference downwards, i.e., when the reference change is negative. On the contrary, when implementing the controller associated with $Q_\mathrm{y, U}$ in Eq.~\eqref{eq:mpc_problem_cost}, the undershoots are negligible, but significant overshoots can be seen when tracking the reference upwards, see Figure~\ref{fig:CV}, blue.

These main experimental observations established the base for the strategy of controller self-tuning. The strategy follows the ideas summarized in Section~\ref{sec:methodology}. Utilizing the nature of the boundary controller with the penalty matrix $Q_\mathrm{y, L}$ is preferred when the reference changes upwards. Therefore, in these operating conditions, the tuning factor is scaled in the first part of the whole interval, i.e., closer to the lower bound. On the contrary, tuning the controller closer to the upper boundary controller associated with $Q_\mathrm{y, U}$ is preferred for negative reference step changes. Therefore, in these operating conditions, the tuning factor is scaled above the splitting value $\rho_{\mathrm{s}}$, i.e., closer to the upper bound. The splitting value of the tuning parameter was chosen simply in the middle of the interval, i.e., $\rho_{\mathrm{s}} = 0.5$. The remaining parameter that needed to be set was the maximal admissible size of the reference step change $\Delta_{\max}$, which was determined to $15\,^{\circ}\mathrm{C}$ as the investigated range of controlled temperature was [$35-50$]\,$^{\circ}\mathrm{C}$. Based on the aforementioned parameters and real-time information about the current reference change, the tuning factor was updated during control. The evolution of the scaled tuning factor $\widetilde{\rho}$ can be seen in Figure~\ref{fig:rho}. When the positive reference changes are tracked, the tuning factor is scaled below the splitting value $\rho_{\mathrm{s}}$. On the contrary, when the reference changes are negative, the tuning factor is scaled above the splitting value $\rho_{\mathrm{s}}$.

The setup of the tuning factor can be associated with tuning of the penalty matrix $Q_{\mathrm{y}}$ according to Eq.~\eqref{eq:tunable_Qy}. The evolution of the penalty matrix $Q_{\mathrm{y}}$ during control is depicted in Figure~\ref{fig:Q}. Note, the penalty matrix evolution in Figure~\ref{fig:Q} does not correspond to tuning of the optimal MPC, but serves for a deeper insight into the association of the interpolated control inputs with the optimal explicit MPC setup.   

\begin{figure}
	\begin{center}
		\includegraphics[width=0.8\textwidth]{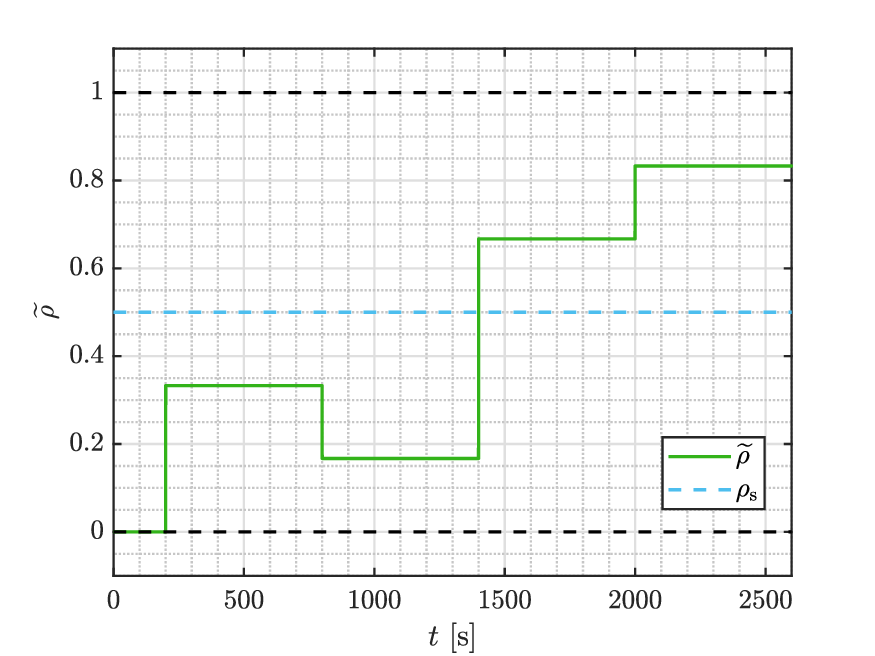}
		\caption{Evolution of the scaled tuning factor $\widetilde{\rho}$ during real-time control. When tracking positive reference changes, the tuning factor is scaled below the splitting value $\rho_{\mathrm{s}}$ (200 -- 1\,400\,s). On the contrary, when the reference changes are negative, the tuning factor is scaled above the splitting value $\rho_{\mathrm{s}}$ (1\,400 -- 2\,600\,s).}
		\label{fig:rho}
	\end{center}
\end{figure}

\begin{figure}
	\begin{center}
		\includegraphics[width=0.8\textwidth]{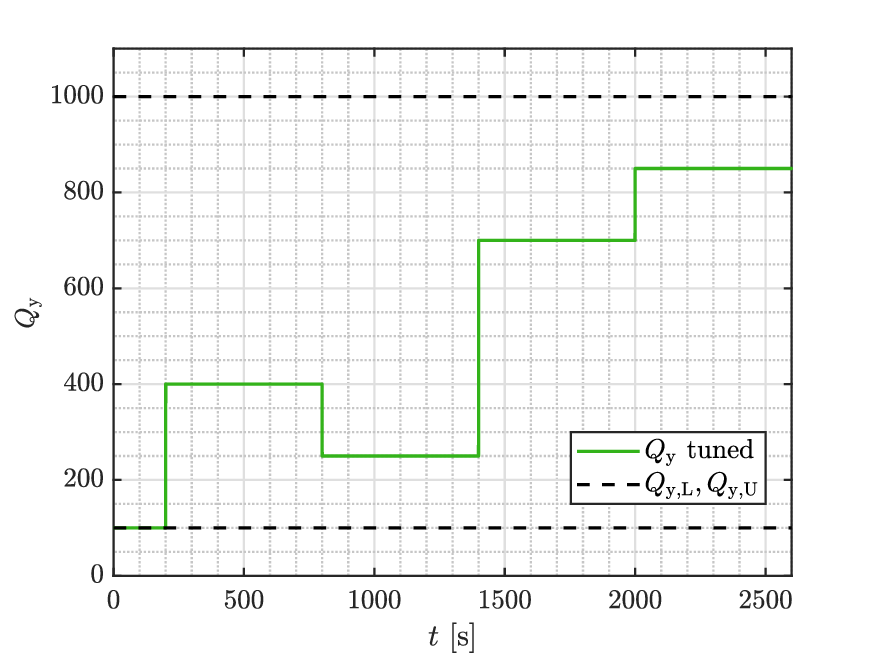}
		\caption{Evolution of the penalty matrix $Q_{\mathrm{y}}$ during real-time control. When tracking positive reference changes, the controller is tuned to operate closer to the lower boundary matrix $Q_\mathrm{y, L}$ (200 -- 1\,400\,s). On the contrary, when the reference changes are negative, the controller is tuned to operate closer to the lower boundary matrix $Q_\mathrm{y, U}$ (1\,400 -- 2\,600\,s).}
		\label{fig:Q}
	\end{center}
\end{figure}

The control input is applied to the system each second, so there is a possible concern regarding the speed at which two explicit MPCs are evaluated. By analyzing the computational speed, it was concluded that the approximate control input can be generated in an average time of 0.01 seconds, which is 100 times faster than the sampling time.

The control results of the self-tunable technique compared to the boundary controllers can be seen in Figure~\ref{fig:CV} for the controlled variable, and in Figure~\ref{fig:MV} for the manipulated variable. It can be seen that the tuned controller combined the benefits of the two boundary controllers. The overshoots and undershoots were reduced, as in the first half of control the penalty matrix $Q_\mathrm{y}$ acquired value from the first half of the penalty interval. When tracking the reference with negative step change, the penalty matrix acquired the values from the second half of the interval, i.e., closer to the upper bound $Q_\mathrm{y, U}$. 
The similarity with the boundary controllers can be seen also on the manipulated variable profiles. Note, the constraints on the input variable were satisfied as they were scaled using linear interpolation based on the boundary controllers which are constructed considering the input constraints. 

\begin{figure}
	\begin{center}
		\includegraphics[width=0.8\textwidth]{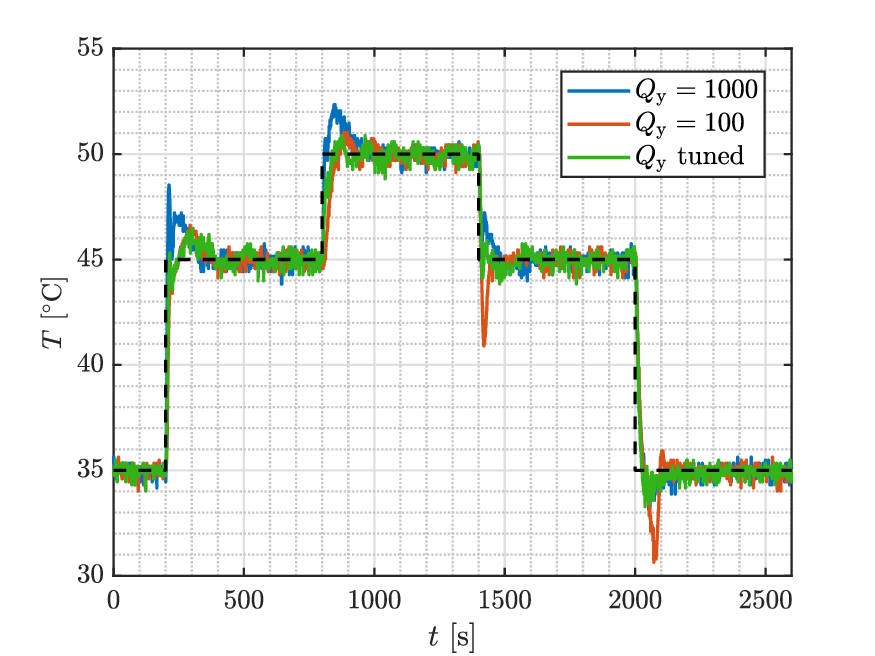}
		\caption{Explicit MPC: Controlled variable trajectory for two boundary controllers and the tuned one. The solid lines represent the controlled temperature $T$ and the dashed line represents the reference value.}
		\label{fig:CV}
	\end{center}
\end{figure}

\begin{figure}
	\begin{center}
		\includegraphics[width=0.8\textwidth]{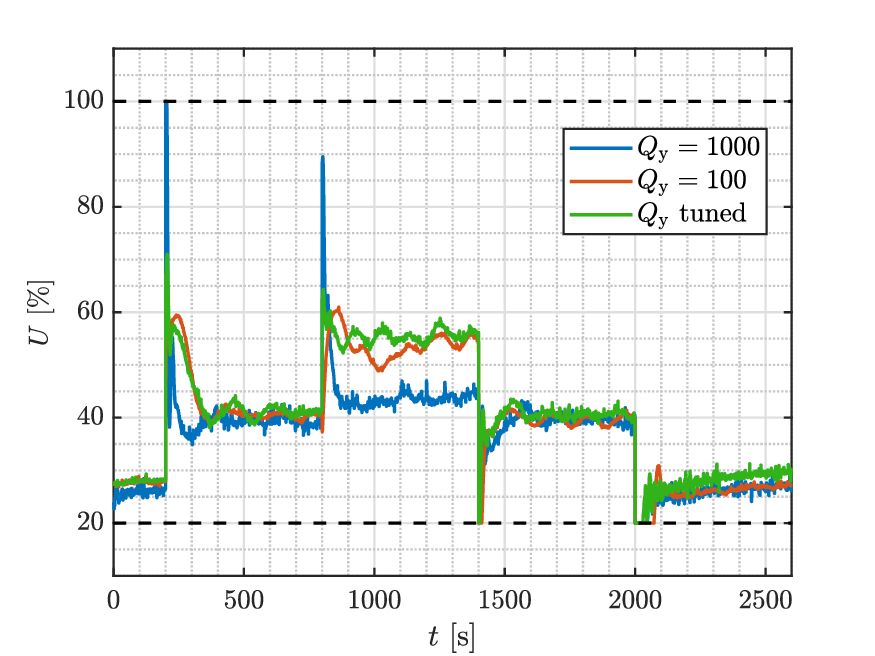}
		\caption{Explicit MPC: Manipulated variable trajectory for two boundary controllers and the tuned one. The solid lines represent the voltage $U$ and the dashed lines represent the constraints.}
		\label{fig:MV}
	\end{center}
\end{figure}

\begin{figure}
	\begin{center}
		\includegraphics[width=0.7\textwidth]{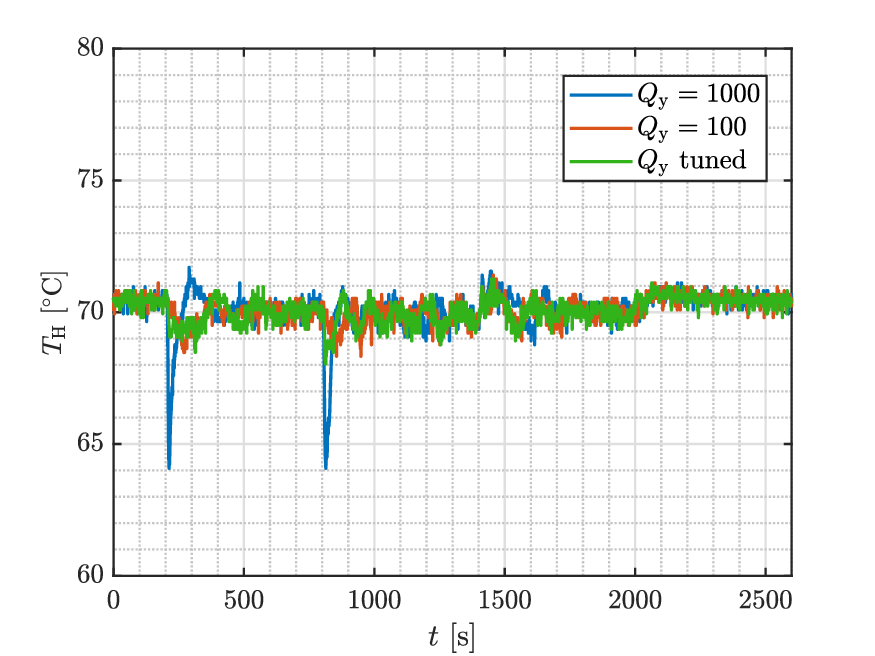}
		\caption{Auxiliary P controller: The trajectory of heating medium temperature control.}
		\label{fig:Th}
	\end{center}
\end{figure}

\begin{figure}
	\begin{center}
		\includegraphics[width=0.7\textwidth]{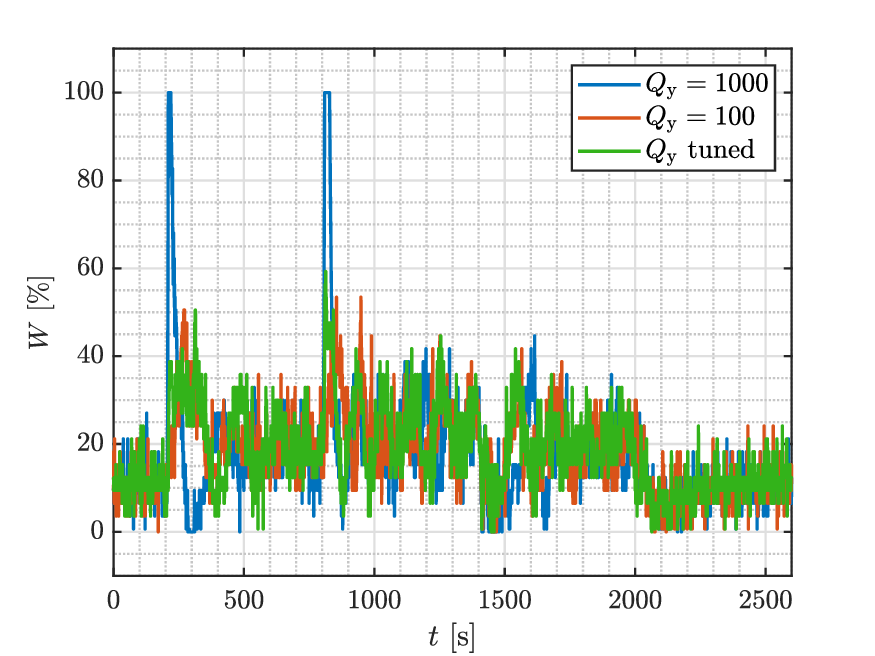}
		\caption{Auxiliary P controller: The trajectory of electric power controlling the heating medium temperature.}
		\label{fig:W}
	\end{center}
\end{figure}

The control performance was also investigated quantitatively. Table~\ref{tab:control_performance} summarizes the evaluated control performance criteria computed for the two boundary controllers and the self-tuned controller. The control performance is evaluated for each reference step change separately. The considered quality criteria are: sum-of-squared control error SSE, maximal overshoot/undershoot $\sigma_{\mathrm{max}}$ and the settling time $t_{\epsilon}$ for 5\,\%-neighbourhood of the reference temperature $T_{\mathrm{ref}}$. To provide better readability of the computed results in Table~\ref{tab:control_performance}, the best values, i.e., the minimum values, are emphasized using a bold font style. 

%	\begin{table}[h!]
%		\begin{center}
%			\caption{Control performance criteria.}
%			\label{tab:control_performance}
%			\begin{tabular}{c|c|c|c|c|c} 
%				Reference step change & $Q_\mathrm{y}$ & $ISE$ [$^{\circ}\mathrm{C}^2$\,s] & $\sigma_{\mathrm{max}}$\,[\%] & $t_{\epsilon}$\,[s] & $V$\,[l] \\
%				\hline
%				\multirow{2}{*}{ 35\,$^{\circ}$C $\rightarrow$ 45\,$^{\circ}$C } & 1000 & 714 & 33.50 & 16.5 & \textbf{2.12} \\
%				& 100 & 867 & 16.65 & 12.5 & 2.36 \\ 
%				& self-tuned & \textbf{678} & \textbf{15.19} & \textbf{9.5} & 2.38 \\ 
%				\hline
%				\multirow{2}{*}{ 45\,$^{\circ}$C $\rightarrow$ 50\,$^{\circ}$C } & 1000 & 365 & 47.20 & \textbf{5} & \textbf{2.49} \\
%				& 100 & 606 & 23.25 & 26.5 & 3.19 \\ 
%				& self-tuned & \textbf{248} & \textbf{19.13} & 9.5 & 3.35 \\ 
%				\hline
%				\multirow{2}{*}{ 50\,$^{\circ}$C $\rightarrow$ 45\,$^{\circ}$C } & 1000 & 245 & \textbf{18.92} & \textbf{6.5} & \textbf{2.00} \\
%				& 100 & 398 & 79.64 & 31 & \textbf{2.00} \\ 
%				& self-tuned & \textbf{186} & 24.59 & \textbf{6.5} & 2.10 \\ 
%				\hline
%				\multirow{2}{*}{ 45\,$^{\circ}$C $\rightarrow$ 35\,$^{\circ}$C } & 1000 & 1024 & 18.43 & 22.5 & 0.94 \\
%				& 100 & 1402 & 41.87 & 90 & \textbf{0.93} \\ 
%				& self-tuned & \textbf{967} & \textbf{16.49} & \textbf{18.5} & 1.10  
%			\end{tabular}
%		\end{center}
%	\end{table}

\begin{table}[h!]
	\begin{center}
		\caption{Control performance criteria.}
		\label{tab:control_performance}
		\begin{tabular}{c|c|c|c|c} 
			Reference step change & $Q_\mathrm{y}$ & SSE [$^{\circ}\mathrm{C}^2$\,s] & $\sigma_{\mathrm{max}}$\,[\%] & $t_{\epsilon}$\,[s]  \\
			\hline
			\multirow{2}{*}{ 35\,$^{\circ}$C $\rightarrow$ 45\,$^{\circ}$C } & 1\,000 & 714 & 33.5 & 16.5 \\
			& 100 & 867 & 16.7 & 12.5 \\ 
			& self-tuned & \textbf{678} & \textbf{15.2} & \textbf{9.5}  \\ 
			\hline
			\multirow{2}{*}{ 45\,$^{\circ}$C $\rightarrow$ 50\,$^{\circ}$C } & 1\,000 & 365 & 47.2 & \textbf{5} \\
			& 100 & 606 & 23.3 & 26.5  \\ 
			& self-tuned & \textbf{248} & \textbf{19.1} & 9.5  \\ 
			\hline
			\multirow{2}{*}{ 50\,$^{\circ}$C $\rightarrow$ 45\,$^{\circ}$C } & 1\,000 & 245 & \textbf{18.9} & \textbf{6.5}  \\
			& 100 & 398 & 79.6 & 31  \\ 
			& self-tuned & \textbf{186} & 24.6 & \textbf{6.5}  \\ 
			\hline
			\multirow{2}{*}{ 45\,$^{\circ}$C $\rightarrow$ 35\,$^{\circ}$C } & 1\,000 & 1\,024 & 18.4 & 22.5  \\
			& 100 & 1\,402 & 41.9 & 90  \\ 
			& self-tuned & \textbf{967} & \textbf{16.5} & \textbf{18.5}   
		\end{tabular}
	\end{center}
\end{table}

As can be seen in Table~\ref{tab:control_performance}, the real-time self-tuning of the explicit MPC controller helped to improve two to three criteria when tracking each reference value. 	
The relative improvement in the percentage, denoted by $\delta$, of using the self-tunable controller is summarized in Table~\ref{tab:improvement} for each reference step change separately. The values were computed as the difference between two criteria values corresponding to the optimal and self-tunable MPC, referred to the self-tunable MPC. The negative numbers represent deterioration of the specific performance criterion in the corresponding reference tracking. 

\begin{table}[h!]
	\begin{center}
		\caption{Relative improvement of the control performance using the self-tunable explicit MPC controller.}
		\label{tab:improvement}
		\begin{tabular}{c|c|c|c|c} 
			& Comparison with $Q_\mathrm{y}$ setup & $\delta$ SSE\,[\%] & $\delta \sigma_{\mathrm{max}}$\,[\%] & $\delta t_{\epsilon}$\,[\%]  \\
			\hline
			\multirow{3}{*}{ 35\,$^{\circ}$C $\rightarrow$ 45\,$^{\circ}$C } & 1000 & 5 &  121 & 74 \\
			& 100 & 28 &  10 & 32 \\ 
			\hline
			\multirow{3}{*}{45\,$^{\circ}$C $\rightarrow$ 50\,$^{\circ}$C } & 1000 & 47 & 147 &$-47$  \\ 
			& 100 & 144 &  22 & 179 \\
			\hline
			\multirow{3}{*}{50\,$^{\circ}$C $\rightarrow$ 45\,$^{\circ}$C } & 1000 & 32 &$-23$& 0 \\ 
			& 100 & 114 & 224 & 377 \\
			\hline
			\multirow{3}{*}{45\,$^{\circ}$C $\rightarrow$ 35\,$^{\circ}$C } &  1000 & 6 & 12 & 22 \\
			& 100 & 45 &  154 & 386 \\
			\hline
			\multirow{3}{*}{Average} &  1000 & 23 & 64 & 12 \\
			& 100 & 83 &  102 & 244
		\end{tabular}
	\end{center}
\end{table}	
%\begin{table}[h!]
%\begin{center}
%\caption{Relative improvement of the control performance using the self-tunable explicit MPC controller.}
%\label{tab:improvement}
%\begin{tabular}{c|c|c|c|c} 
%Reference step change & $\Delta ISE$\,[\%] & $\Delta \sigma_{\mathrm{max}}$\,[\%] & $\Delta t_{\epsilon}$\,[\%] & $\Delta V$\,[\%] \\
%\hline
%35\,$^{\circ}$C $\rightarrow$ 45\,$^{\circ}$C & 5.04 & 8.77 & 24.00 & -12.26 \\ 
%\hline
%45\,$^{\circ}$C $\rightarrow$ 50\,$^{\circ}$C & 32.05 & 17.72 & -90.00 & -34.54 \\ 
%\hline
%45\,$^{\circ}$C $\rightarrow$ 50\,$^{\circ}$C & 24.08 & -29.97 & 0 & -4.48 \\ 
%\hline
%45\,$^{\circ}$C $\rightarrow$ 35\,$^{\circ}$C & 5.57 & 10.53 & 17.78 & -18.28  
%\end{tabular}
%\end{center}
%\end{table}

Compared to the considered non-self-tunable controllers, the control trajectories and the evaluated quality criteria confirmed the improved control performance for the reference tracking control problem of the heat exchanger with the non-linear and asymmetric behavior. Implementing a self-tunable explicit MPC controller leads to improved control performance in the most analyzed quality criteria, see Table~\ref{tab:improvement}.	 
In average, the control performance criteria improved compared to the upper and lower boundary MPC respectively as follows: the squared-error-based criterion (SSE) reduced by 23\% and 83\%, the maximal overshoot/undershoot $\sigma_{\mathrm{max}}$ reduced by 64\% and 102\%, and the settling time $t_{\epsilon}$ reduced by 12\% and 244\%. 

%To analyze the control from the environmental viewpoint, energy consumption was also investigated. Technically, the total power supply utilized to heat the heating medium was measured and summarized for the whole control but for each controller setup respectively. The total energy consumption $E$ consumed by each controller as well as the relative improvement subject to each boundary controller is summarized in Table~\ref{tab:energy}.

%\begin{table}[h!]
%	\begin{center}
%		\caption{Evaluation of the total energy consumption.}
%		\label{tab:energy}
%		\begin{tabular}{c|c} 
%			$Q_\mathrm{y}$ & $E$\,[kJ] \\
%			\hline
%			1000 &  885 \\ 
%			% \hline
%			100 & 904 \\ 
%			% \hline
%			self-tuned & 943
%		\end{tabular}
%	\end{center}
%\end{table}

%It can be seen in Table~\ref{tab:energy}, that utilizing the self-tuned controller leads to energy consumption which acquires value between the energy consumption corresponding to the boundary controllers. Although it is not the minimum one, the energy is still saved compared to the boundary controller with $Q_\mathrm{y} = 100$. Note, though the controller with $Q_\mathrm{y} = 1\,000$ leads to minimum energy consumption, using this single non-tunable controller does not ensure the required control performance. As the laboratory-scaled heat exchanger plant served as the testbed, greater energy savings are expected for the industrial-scaled plants. 

	In general, utilizing the proposed controller with a scalable aggressiveness according to the operating conditions leads to higher accuracy (lower SSE), lower value of the overshoots (reduced $\sigma_{\mathrm{max}}$), and faster achieving the reference value (decreased $t_{\epsilon}$).

	Obviously, if there exists a well-tuned ``universal'' controller that satisfies the requirements on the control performance in the whole range of the considered operating conditions, then the implementation of the self-tuning procedure is out of scope for such control application. Nevertheless, in numerous practical situations, using only one controller with a constant setup leads to poor or just ``satisfactory'' control results, i.e., the reference value is achieved, but with worse control performance, e.g., leading to high overshoots or settling times. When working on our laboratory case study, a set of different setups of penalty matrices was investigated. In every control scenario, the setup was beneficial only in some working conditions (tracking the reference upwards or downwards). Therefore, the closed-loop control performance is improved by introducing the benefits of the self-tuning method based on the two boundary MPC controllers.

	Note that this strategy relies on a proper design of the two boundary controllers.
	In case a non-negligible disturbance occurs, both boundary controllers
	should be able to solve a disturbance rejection problem as the final
	value of the manipulated variables is interpolated between them. To address the impact of the disturbances directly in constructing the MPC controller design, a robust MPC strategy should be considered, e.g., see~\cite{PR11}. Any robust MPC design method leads to conservative control actions as some portion of the performance is sacrificed to compensate for the impact of the disturbances. Nevertheless, if it is possible to obtain the explicit (multi-parametric) solution of the robust explicit MPC offline, then the same self-tuning procedure is applicable to interpolate between the control actions from the robust controllers. 
	
	\section{Conclusions}
	\label{sec:conclusion}
	
	This paper deals with the experimental implementation and analysis of the novel self-tunable approximated explicit model predictive control method and provides a strategy for an effective self-tuning controller design. Based on the current value of the piece-wise constant reference, the tuning parameter is scaled using linear interpolation. 
	The previously published work related to the self-tunable explicit MPC suggested tuning based on the distance of the reference value from the system steady-state value. This paper presents a novel perspective idea of self-tuning based on the size of reference step change. The self-tuning algorithm aims to compensate for the nonlinear behavior of the controlled system. The self-tuning parameter is updated whenever the reference changes. The tuning value is calculated as the ratio between the size of the reference change and the maximal admissible size of the reference change, which is specified before operation. 
	Another novel contribution addresses the challenging control problem of asymmetric system behavior by splitting the interval of the self-tuning parameter into two ranges, while both intervals are assigned to different operating conditions. The proposed method is implemented on a laboratory-scaled heat exchanger with nonlinear and asymmetric behavior. The asymmetry makes the plant a suitable candidate to analyze the benefits of splitting the interval of the tuning parameter. The decision criterium is negativity or positivity of reference change. When the reference changed upwards, the control input was tuned in the first part of the interval and approached the boundary controller associated with the lower bound on the selected penalty matrix. On the contrary, when the reference changed downwards, the control input was tuned to approach the control input from the boundary controller with the upper bound on the penalty matrix. 
	
	To properly investigate the control results, the control performance was also judged quantitatively using a set of quality criteria. The self-tunable control approach outperformed the conventional control strategy handling just a single controller, i.e., non-tunable controller. In average, the control performance criteria improved compared to the upper and lower boundary MPC respectively as follows: the squared-error-based criterion (SSE) reduced by 23\,\% and 83\,\%, the maximal overshoot/undershoot $\sigma_{\mathrm{max}}$ reduced by 64\,\% and 102\,\%, and the settling time $t_{\epsilon}$ reduced by 12\,\% and 244\,\%.
	
	The approach of the self-tunable technique was successfully implemented on a SISO system but can also be extended to multivariable systems by utilizing only a single value of the tuning parameter $\rho$ to interpolate the values of every control input. It is suggested in Eq.~\eqref{eq:rho_max} that the tuning parameter $\rho$ can be calculated as the maximal value of all the tuning parameters computed for every output reference. However, this is straightforward to implement only for decoupled systems. If there are strong interactions between the system states, the self-tunable technique is challenging to design. In such a case, it is necessary to include expert knowledge about the system state interactions, and the resulting value of the tuning parameter $\rho$ could be computed, e.g., as a weighted average of the individual tuning parameters.
	\section*{Acknowledgments}
	
	The authors gratefully acknowledge the contribution of the Scientific Grant Agency of the Slovak Republic under the grants 1/0545/20, 1/0297/22, and the Slovak Research and Development Agency under the project APVV-20-0261. 
	This research is funded by the European Union’s Horizon Europe under grant no. 101079342 (Fostering Opportunities Towards Slovak Excellence in Advanced Control for Smart Industries). The authors also acknowledge Petronela Belková for help with generating the experimental data.   
	
	\section*{Nomenclature}
	
	\subsection*{Symbols}
	%	\begin{center}
		\begin{tabular}{ l l }
			$A$ & system state matrix \\
			$\widetilde{A}$ & augmented system state matrix \\
			$B$ & system input matrix \\
			$\widetilde{B}$ & augmented system input matrix \\
			$C$ & system output matrix \\
			$\widetilde{C}$ & augmented system output matrix \\
			$d_{\max}$ & maximal deviation from the steady-state value \\
			$F$ & slope of the affine control law \\
			$g$ & section of the affine control law \\
			$I$ & identity matrix \\
			$k$ & step of the prediction horizon \\
			$K$ & system gain, $^{\circ}\mathrm{C}$\\
			$N$ & prediction horizon \\
			$n_{\mathrm{u}}$ & size of system inputs \\
			$n_{\mathrm{y}}$ & size of system outputs \\
			$n_{\widetilde{\mathrm{x}}}$ & size of augmented system states \\
			$P$ & proportional gain of proportional controller\\
			$Q_{\mathrm{x}}$ & penalty matrix of the built-in integrator \\
			$Q_{\mathrm{x,L}}$ & lower bound on the penalty matrix of the built-in integrator \\
			$Q_{\mathrm{x,U}}$ & upper bound on the penalty matrix of the built-in integrator \\
			$Q_{\mathrm{y}}$ & penalty matrix of the control error \\
			$Q_{\mathrm{y,L}}$ & lower bound on the penalty matrix of the control error \\
			$Q_{\mathrm{y,U}}$ & upper bound on the penalty matrix of the control error \\
			$R$ & penalty matrix of system inputs \\
			$R_{\mathrm{L}}$ & lower bound on the penalty matrix of system inputs \\
			$R_{\mathrm{U}}$ & upper bound on the penalty matrix of system inputs \\
			$\mathcal{R}$ & critical region \\
			$\mathbb{R}$ & Euclidean space of real numbers \\
			$t$ & time, s \\
			$t_{\epsilon}$ & settling time, s \\
			$T$ & temperature, $^{\circ}\mathrm{C}$ \\
			$T_{\mathrm{C}}$ & cold medium temperature, $^{\circ}\mathrm{C}$ \\
			$T_{\mathrm{H}}$ & heating medium temperature, $^{\circ}\mathrm{C}$ \\
			$T_{\mathrm{H, ref}}$ & heating medium reference temperature, $^{\circ}\mathrm{C}$ \\			
		\end{tabular}
		%	\end{center}
	
	%% Page - break
	
	%	\begin{center}
		\begin{tabular}{ l l }
			$T_{\mathrm{ref}}$ & reference temperature, $^{\circ}\mathrm{C}$ \\
			$T_{\mathrm{s}}$ & sampling time, s \\
			$T^{\mathrm{s}}$ & steady state of temperature, $^{\circ}\mathrm{C}$ \\
			$u$ & control inputs \\
			$u_{\mathrm{L}}$ & control inputs associated with the lower boundary controller\\
			$u_{\mathrm{U}}$ & control inputs associated with the upper boundary controller\\
			$U$ & voltage, \% \\
			$U^{\mathrm{s}}$ & steady state of voltage, \% \\
			$\mathcal{U}$ & set of control inputs \\
			$W$ & electric power, \% \\			
			$x$ & system states \\
			$\widetilde{x}$ & augmented system states \\
			$x_{\mathrm{I}}$ & system states corresponding to the built-in integrator \\
			$y$ & system outputs \\
			$y_\mathrm{\max}$ & maximal value of system outputs \\
			$y_\mathrm{\min}$ & minimal value of system outputs \\
			$y_\mathrm{ref}$ & reference value of system outputs \\
			$\mathcal{Y}$ & set of system outputs \\
			\textit{0} & zero matrix
		\end{tabular}
		%	\end{center}
	
	\subsection{Greek letters}
	%	\begin{center}
		\begin{tabular}{ l l }
			$\delta$ & relative improvement, \% \\
			$\Delta_\mathrm{max}$ & maximal size of the reference change, $^{\circ}\mathrm{C}$  \\
			$\Delta_\mathrm{ref}$ & size of the reference change, $^{\circ}\mathrm{C}$ \\
			$\rho$ & tuning factor \\
			$\widetilde{\rho}$ & scaled tuning factor \\
			$\rho_{\mathrm{s}}$ & splitting value of the tuning factor \\
			$\sigma_{\max}$ & maximal overshoot, \% \\
			$\tau$ & system time constant, s\\
			$\theta$ & parameter of optimization problem \\
			$\Theta$ & set of parameter values
		\end{tabular}
		%	\end{center}
	
	\subsection*{Abbreviations}
	%	\begin{center}
		\begin{tabular}{ l l }
			eMPC & explicit model predictive control \\			
			LTI  & linear time-invariant (system) \\
			LQR & linear-quadratic regulator \\
			MPC  & model predictive control \\
			mp-QP& multi-parametric quadratic programming (problem) \\
			PID  & proportional–integral–derivative (controller) \\
			PWA  & piece-wise affine (function) \\
			PWC  & piece-wise constant (function) \\
			QP   & quadratic programming (problem) \\
			SISO & single-input and single-output (system) \\
			SSE  & sum-of-squared error 
		\end{tabular}
		%	\end{center}
	
	%% The Appendices part is started with the command \appendix;
	%% appendix sections are then done as normal sections
	%% \appendix
	
	%% \section{}
	%% \label{}
	
	%% If you have bibdatabase file and want bibtex to generate the
	%% bibitems, please use
	%%
	
	\bibliographystyle{elsarticle-num} 
	\bibliography{references}
	
	%% else use the following coding to input the bibitems directly in the
	%% TeX file.
	
	%\begin{thebibliography}{00}
		%
		%%% \bibitem{label}
		%%% Text of bibliographic item
		%
		%\bibitem{}
		%
		%\end{thebibliography}
	
\end{document}